\newtheorem{lemma}{Lemma}
\newtheorem{theorem}{Theorem}
\newtheorem{definition}{Definition}
\newtheorem{corollary}{Corollary}
\newtheorem{remark}{Remark}
\newcommand{\shouldWeRemove}[1]{}
\newcommand{\sacomment}[1]{}     
\newcommand{\sacommentMinor}[1]{}
\newcommand{\todo}[1]{}
\newcommand{\removeEC}[1]{}
\newcommand{\delete}[1]{}
\DeclareMathOperator{\Rev}{\mathrm{Rev}}
\newcommand{\rev}{\mathrm{Rev}}
\DeclareMathOperator{\Mye}{\mathrm{Mye}}
\newcommand{\revmye}{\rev^{\mathrm{Mye}}}
\newcommand{\eplmax}{\epl_{\text{max}}}
\DeclareMathOperator{\E}{\mathbb{E}}
\DeclareMathOperator{\argmax}{argmax}
\newcommand{\alloc}{x}
\newcommand{\payment}{p}
\newcommand{\threshold}{H}
\newcommand{\qu}{q}
\newcommand{\qum}{q_{m_b}}
\newcommand{\quplus}{q^{\dagger}}
\newcommand{\quan}{\qu_n}
\newcommand{\quanplus}{\quplus_n}
\newcommand{\quansplus}{\quplus_{\nsoph}}
\newcommand{\myeprice}{p_{m_b}}
\newcommand{\ep}{\ell}
\newcommand{\prob}{\theta}
\newcommand{\badrounds}{R}
\newcommand{\resg}{r^g}
\newcommand{\resb}{r^b}
\newcommand{\epl}{\mathcal E}
\newcommand{\Ex}{\mathbb{E}}
\newcommand{\Kbound}{\frac{10}{\epsilon(1-\epsilon)}}
\newcommand{\KboundE}{\frac{8}{\epsilon(1-\epsilon)}}
\newcommand{\rhobound}{\frac{\epsilon(1-\epsilon)(1-\delta)(1-\rho)}{12(1+\epsilon)}}
\newcommand{\rhoCompactBound}{\frac{\epsilon(1-\epsilon)^4}{12}}
\newcommand{\deltaH}{\sqrt{\frac{4\log(1/\epsilon)}{\threshold}}}
\newcommand{\Hdelta}{\frac{4\log(1/\epsilon)}{\delta^2}}
\newcommand{\eplVal}{\frac{2\threshold m_g}{(1-\delta)(1 - \rho)}}
\newcommand{\eplmaxVal}{\frac{2\threshold n}{(1-\delta)(1 - \rho)}}
\newcommand{\uncleared}{U}
\newcommand{\allocated}{A}
\newcommand{\val}{v}
\newcommand{\valit}{\val_{it}}
\newcommand{\Val}{V}
\newcommand{\bid}{b}
\newcommand{\bidi}{\bid_{i}}
\newcommand{\bidj}{\bid_{j}}
\newcommand{\bidit}{\bid_{it}}
\newcommand{\nsoph}{n_{\text{soph}}}
\newcommand{\nnaive}{n_{\text{naive}}}
\newcommand{\nnaiveg}{n_{\text{naive,g}}}
\newcommand{\naive}{\text{naive}}
\newcommand{\ksoph}{k_{\text{soph}}}
\newcommand{\goodauction}{\textsc{Good-State-Auctions}\xspace}
\newcommand{\badauction}{\textsc{Bad-State-Auctions}\xspace}
\newcommand{\INPUT}{\item[{\bf Input:}]}
\renewcommand{\algorithmiccomment}[1]{\bgroup\hfill\footnotesize~#1\egroup}
\begin{document}

\title{Dynamic First Price Auctions Robust to Heterogeneous Buyers}
\author{Shipra Agrawal\footnote{Columbia University. sa3305@columbia.edu. This research was supported in part by a Google Faculty Research Award and NSF CAREER award CMMI-1846792.} 
\and  Eric Balkanski\footnote{Harvard University. ericbalkanski@g.harvard.edu.  This research was supported in part by a Google PhD Fellowship.}  
\and  Vahab Mirrokni\footnote{Google Research, New York. mirrokni@google.com.} 
\and Balasubramanian Sivan \footnote{Google Research, New York. balusivan@google.com.} 
}
\date{}

\maketitle

\begin{abstract}
We study dynamic mechanisms for optimizing revenue in repeated auctions, that are robust to heterogeneous forward-looking and learning behavior of the buyers. Typically it is assumed that the buyers are either all  myopic or  are all infinite lookahead, and that buyers understand and trust the mechanism. These assumptions raise the following question: is it possible to design approximately revenue optimal mechanisms when the buyer pool is heterogeneous? Facing a heterogeneous population of buyers with an unknown mixture of $k$-lookahead buyers, myopic buyers, no-regret-learners and no-policy-regret learners, we design a simple state-based mechanism that achieves a constant fraction of the optimal achievable revenue. 
\end{abstract}


\newpage

\section{Introduction}
\label{sec:intro}

Bundling increases revenue in commerce: this is a well known fact widely used in practice (e.g., Amazon often suggests a ``frequently bought together'' bundle while browsing many items). This same principle is at the heart of why stateful repeated/dynamic auctions are significantly more lucrative than one-shot auctions: stateful dynamic auctions profitably use the opportunity to bundle across time. 
This revenue opportunity has inspired a series of recent works~\cite{PPPR16,ADH16,MPTZ16a,BMP16,LiuP17,MLTZ18,ADMS18,BMSW18,BMLZ19} on designing dynamic auctions satisfying various desired properties.

The ability to link auctions across time significantly expands the design space of auctions allowing some exceedingly complex auctions. How does a buyer optimize when bidding in such auctions? The standard and widely used notion in the literature of incentive compatibility (IC) requires that the buyers understand these complicated auctions well, and have an ``infinite lookahead''. I.e., they require the buyer to think about the consequences of their bid in the current round on all future round utilities and optimize current round bid accordingly. In particular, as~\cite{ADMS18} note, this assumes that the buyers understand the mechanism deeply enough to optimally respond, they believe that their interactions with the seller will last for all future rounds accounted for when computing utility, believe that the seller is indeed strictly following the advertised  mechanism etc.

Numerous practical reasons make these assumptions far from true in reality:  buyer's computational limitations, inability to predict the future well enough, inability to trust a seller or understand the exact auction that is run in a complex supply chain of auctions. A striking example of this is the display ads market in Internet advertising. Given the number of ad exchanges, and the variety of purchase mechanisms that are constantly evolving over time, buyers are often unable to trust or verify whether a seller has stuck to an announced mechanism. 

The consequence is that the seller faces a heterogeneous buyer population employing a large spectrum of strategies to maximize their perceived utility. As~\cite{ADMS18} note, such a buyer could behave myopically, or have a limited lookahead (i.e., a $k$-lookahead instead of infinite lookahead), or be a learner that makes decisions only based on past performance of various bidding strategies thereby completely disregarding the seller's description/promises in the mechanism's future.

The solution to tackling a heterogeneous buyer behavior cannot be a buyer-specific auction that tailors the optimal auction for a given buyer behavior. Implementing such a discriminative auction may be legally infeasible and also impractical (buyer's behavior may even change over time). Motivated by these observations,~\cite{ADMS18} 
 considered the setting of a single seller repeatedly interacting with a {\it single} buyer whose behavior (myopic/infinite lookahead/learner etc.) the seller is a priori unaware of, and designed a single mechanism that simultaneously obtains a constant fraction of the optimal revenue achievable against each potential buyer type. In this paper, we  study the following question:
\begin{center}
\emph{Is there an $n$-buyer mechanism that is robust against heterogeneous buyer behaviors?}
\end{center}
We consider a general setting with an arbitrary and unknown mixture of \emph{multiple} heterogeneous buyers. The challenges introduced by considering multiple buyers are discussed after describing the setting and main result.

\paragraph{The setting.}
We study a repeated interaction between a single seller and $n$ buyers over $T$ rounds. At the 
beginning of 
each round $t = 1,2,\ldots, T$, there is a 
single fresh good for sale whose private value $\val_{i,t} \in \Val$ 
for buyer $i$ is drawn, independently from other $j \neq i$ and $t' \neq t$, from a publicly known distribution
$F$ with finite expectation $\mu$. The buyer observes the valuation $\valit$ before making a bid $\bidit$. The good for sale in round $t$ has to be either allocated to one of the buyers or discarded immediately. Each buyer's valuation is additive across rounds. We consider a range of buyer behaviors similar to~\cite{ADMS18}, including myopic buyers, $k$-lookahead buyers, no-regret learners and no-policy-regret learners. Formal definitions of these different buyer behaviors are provided in Section \ref{sec:prelim}.

We  categorize as {\it sophisticated} buyers the buyers who are either $k$-lookahead for $k \geq \ksoph = \Theta(n)$ or  no-policy-regret learners, and {\it $\naive$} buyers  those who are myopic ($0$-lookahead) or no-regret learners. Given a population with $\nsoph$ sophisticated and $\nnaive=n-\nsoph$ $\naive$ buyers, a robust mechanism aims to achieve close to maximum per-round revenue achievable from such a population, without prior knowledge of which buyer is $\naive$ or sophisticated, or the values of $\nsoph$ and  $\nnaive$.

\paragraph{Upper bound on revenue.} With only  $\nnaive$ \naive \ buyers, e.g. $\nnaive$ myopic buyers, it is impossible in each round to get more than the optimal  $\revmye(\nnaive)$  revenue in a single-round auction obtainable from Myerson's auction~\cite{M81}.
With only $\nsoph$ sophisticated buyers, it is impossible to get more than $\E_{v_1,\dots,v_{\nsoph}\sim F}\left[\max_{i }  v_i\right]$ revenue, as revenue is upper bounded by the maximum value. It is easy to show that $\E_{v_1,\dots,v_{\nsoph}\sim F}\left[\max_{i } v_i\right] \leq \quansplus :=  \E_{v\sim F}\left[v | v \geq F^{-1}(1-1/\nsoph)\right]$ (see Appendix~\ref{sec:rev_up}). In  a setting with $\nsoph$   \emph{and} $\nnaive$  buyers, the total revenue achievable is thus at most $\quansplus + \revmye(\nnaive)$ (see Appendix~\ref{sec:rev_up}). Motivated by this upper bound, we define the following benchmark.

\begin{definition}
We call a mechanism {\bf $(\alpha, \beta)$-robust} if, for any per-round 
valuation distribution $F$, and for every value of $\nsoph \in [n]$ (and $\nnaive = n - \nsoph$), without knowing $\nsoph$ or $\nnaive$, it achieves an expected per round revenue of at least  $\alpha\cdot\revmye(\nnaive) + \beta\cdot\quansplus - o(1).$
\end{definition}

\paragraph{Main result.} We construct a mechanism that is Interim Individually Rational (IIR)\footnote{\cite{ADMS18} show that it is impossible to achieve such a high-revenue with per-round ex-post IR unless the buyer's lookahead is very high even in a single buyer setting, which rules that out as well in our setting.} 
and $(\Theta(1), \Theta(1))$-robust. I.e., for every value of $\nsoph,\nnaive$, without knowing $\nsoph$ or $\nnaive$, it achieves a per-round revenue within a constant factor of the near optimal
$\revmye(\nnaive) + \quansplus-o(1)$ revenue. The mechanism is a simple-to-implement first price auction  with  reserve price  based on the current state of buyers (where state is a succinct summary of the buyer's history). With the display ads industry moving to use first-price auctions~\cite{adageGoogle,fpera,rubicon}, this result is especially significant.

\subsection{Overview of challenges and technical approach}  The interactions between $n$ buyers, with heterogeneous and unknown lookahead  and learning behaviors, introduce multiple challenges compared to the single buyer setting studied in \cite{ADMS18}.

 
 Firstly an equilibrium, which is a profile of mutually best responding strategies from all agents, and a widely used predictor of a mechanism's outcome, is unlikely to exist in our setting. Therefore it is not possible to prove revenue guarantees by arguing about what revenue would be obtained in the equilibrium outcome. Indeed, far from being able to pinpoint what our mechanism's outcome will be, we only guarantee what the mechanism's outcome \emph{will not be}, as long as all agents satisfy \emph{much weaker} notions of rationality like playing undominated strategies~\cite{BLP06} or no-regret strategies. This aligns well with our assumptions of heterogeneity and distrust among buyers. With just this guarantee of which outcomes will not occur in our mechanism, we are able to establish our strong revenue guarantees. In contrast, in the 1-buyer setting of~\cite{ADMS18}, there are no equilibrium concerns: the single buyer simply best responds according to his utility function.
 
 Another non-trivial challenge is that while proving results in undominated strategies, we have to establish a given strategy for buyer $i$ being dominated regardless of another buyer $j$'s strategies. In particular $j$'s strategy could be an arbitrary function of the entire history of not only $j$'s own bids and outcomes, but also those of $i$. This creates complex dependencies of buyer $i$'s future utility on his bid today. For example, even if $i$ bids truthfully according to his valuations today, he has to consider a strategy of $j$ that would bid very high and sabotage $i$ in the future if it sees $i$ bidding beyond a certain bid in the current round. It requires careful design choices in the mechanism to be able to guarantee certain buyer behavior under such complex side-effects of a buyer's bid. Key aspects of our mechanism are designed to allow lower bounding the utility of a buyer irrespective of other buyers' behavior. This ability is crucial for establishing our revenue guarantees.
 

We construct a state-based mechanism where   sophisticated buyers  remain in a good state yielding high revenue, and where $\naive$ buyers  bid  as in an approximately revenue optimal one round auction. The mechanism incentivizes sophisticated buyers to remain in this good state \emph{in spite of the arbitrariness of other buyers' responses}, while also achieving high revenue.  To guarantee high utility to every sophisticated buyer in the good state, the mechanism temporarily ``rests'' buyers who have recently been allocated a large  number of items. Ignoring such buyers for a small number of rounds guarantees some rounds for each buyer to enjoy positive utility and get the item as long as they bid high enough. To do well against the benchmark of $\quansplus + \revmye(\nnaive)$ that extracts the optimal revenue possible from every buyer category, the mechanism necessarily has to  adapt to the mixture  of buyer population it is observing. Incorporating this adaptivity, and handling the resulting complications in utility analysis are some further challenges that we handle in our mechanism design and analysis. 


\subsection{Related work}

There are several streams of literature related to our work in dynamic mechanism design. 

\paragraph{Revenue maximization in dynamic auctions.} The stream closest to our paper is the work on revenue maximization in repeated auctions \cite{PPPR16,ADH16,MPTZ16a,BMP16,LiuP17,MLTZ18,ADMS18,BMSW18,BMLZ19}. The main difference of our work from all of these, with the exception of ~\cite{ADMS18, BMSW18}, is the notion of dynamic incentive compatibility which assumes that all buyers are infinite lookahead, while we allow arbitrary mixtures of buyer attitudes. Agrawal et al.~\cite{ADMS18} study the $1$-buyer setting and design robust auctions when the buyers are $k$-lookahead buyers or no-simple-regret learners or no-policy-regret learners. Braverman et al.~\cite{BMSW18} also study the $1$-buyer setting and design mechanisms to extract more than Myerson's revenue when the buyers follow no-simple-regret learning strategies, and in particular a subset of them called mean-based bidding strategies.

\paragraph{Repeated interactions with evolving values.} There is a large body of work on designing mechanisms in repeated interactions when the buyers' values evolve over time. See~\cite{BB84,Besanko85,CL00, Battaglini05, ES07,AS13,KLN13,PST14,BS15,CDKS16} for an overview of dynamic mechanisms in such settings.

\paragraph{Bargaining, durable goods monopolist and Coase conjecture.} Unlike our setting where the value is drawn independently in every round, there 
is a large body of literature in economics that studies settings where the value 
is initially drawn from a distribution but remains fixed later on. This setting 
can be motivated based on several applications including bargaining, durable 
goods monopoly and behavior based discrimination. See~\cite{FV06} for an 
excellent survey and references therein for an overview of this area and~\cite{DPS15,ILPT17} for work in the theoretical computer science literature. 

\paragraph{Lookahead search.}
	The study of $k$-lookahead search can be viewed in the context 
	of {\em bounded rationality}, as pioneered by Herb Simon \cite{Sim55}.  He argued that,
	instead of optimizing, agents may apply a class of heuristics, termed satisfying heuristics in
	decision making. A natural choice of such heuristics is restricting
	the search space of best-response moves.
	Lookahead search in decision-making has been motivated and examined in great extent by the artificial intelligence community~\cite{N83,KRS92,SKN09}. Lookahead search is also related to the sequential thinking framework in game theory \cite{SW94}.  More recently, ~\cite{MTV12} study the quality of equilibrium outcomes for look-ahead search strategies for various classes of games.
	They observe that the quality of resulting equilibria increases in generalized second-price auctions, and duopoly games, but not in other classes of games. No prior work studies dynamic mechanisms that are robust against various lookahead search strategies.
	

\section{Preliminaries}
\label{sec:prelim}
There are $n$ buyers and a single item for auction at each of $T$ sequential time steps. The value of buyer $i \in [n]$ at time step $t \in [T]$ for the item is denoted by $v_{i,t}$ and is drawn i.i.d.  from a common prior distribution $F$. The distribution $F$ is known to all the buyers and the seller. The realization of the value $v_{i,t} \sim F$  is private and is visible only to buyer $i$. In each round $t$, every buyer makes a bid $b_{i,t}$. A buyer may use history of bids, allocation, and payments before round $t$ along with its own private valuations until round $t$, to decide the bid in round $t$. The seller uses the entire bid profile ${\bf b}_t = (b_{1,t}, \ldots, b_{n,t})$ at time $t$, along with history (bids, allocations, and payments) before time $t$ to decide allocation $x_{i,t} \in \{0,1\}$ and payment $p_{i,t}\ge 0$ for each buyer, such that $\sum_{i=1}^n x_{i,t}\le 1$.
 
More precisely, the bidding, payment, and allocation strategy space are defined as follows.
\begin{definition}
Let $H_{t-1}:=({\bf b}_1, {\bf x}_{1}, {\bf p}_{1},\ldots, {\bf b}_{t-1}, {\bf x}_{t-1}, {\bf p}_{t-1})$ denote the history of bids, allocation, and payment  before time $t$. 
\end{definition}

\begin{definition}[Bidding strategy]
\label{def:strategy}
Buyer $i$'s bids in round $t$ are decided by functions $b_{i,t}=s_{i}(H_{t-1}, v_{i,t})$, i.e. bids in round $t$ are (possibly randomized) functions of history $H_{t-1}$ and buyer's private valuation in round $t$.
\end{definition}
\begin{definition}[Payment and allocation function]
For each buyer $i$, payment and allocation in round $t$ are given by $x_{i,t} = \alloc_i(H_{t-1}, {\bf b}_t)$ and $p_{i,t}=\payment_i(H_{t-1}, {\bf b}_t)$ of history $H_{t-1}$ and bid profile in round $t$. Here, $\alloc_i(H_{t-1}, {\bf b}_t) \in [0,1]$  denotes the probability of allocation to bidder $i$, and $\payment_i(H_{t-1}, {\bf b}_t)\ge 0$ denotes the bidder's expected payment.
\end{definition}
\begin{definition}[Buyer's realized utility]
\label{def:utility} 
Buyer $i$'s realized utility in a round $t$ is given by $u_{i,t}=v_{i,t} \alloc_{i,t} -\payment_{i,t}$, i.e., it is the difference between valuation and payment if the good is allocated to the buyer, and $0$ otherwise. We often refer to this as simply the buyer's utility in a round.
\end{definition}

\begin{remark}
Our current definition of history includes everything from past that  can possibly be revealed to buyers and can be relevant to a buyer's bidding strategy. However, our results are not tied to this particular definition of history. For example, instead of revealing the entire vector of bids from the past, if we happen to just reveal the allocation and payment after each auction, that would simply restrict the buyer's strategy space further. Our arguments will hold as long as a buyer can see their own bids, allocation, and payments in the past, and whether or not the good was allocated in each round.
\end{remark}

\begin{remark}
Technically, the history for buyer $i$ should also include their own private valuations in the past. However, since the valuations are generated independently in each round from a known valuation distribution, are visible only to the buyer, and the buyer's total utility is additive across rounds, the past valuations are irrelevant for the buyer's bidding strategy. Therefore, for simplicity, we chose to eliminate them from the definition of buyer's history.  
\end{remark}
Also define buyer i's history, $H_{i,t-1}:= (H_{t-1}, v_{i,1}, \ldots, v_{i,t-1})$.
would things should be still fine: this is because whatever be the strategies other buyers use, in a domination argument, we are going to fix those strategies, and then show that a chosen strategy for a protagonist buyer is dominated by another strategy.

In this paper, we focus on the first price auction mechanism which is defined by the following specific allocation and payment function, with flexibility to choose the mechanism for setting a ``reserve price".

\begin{definition}[First price auction with reserve price]
In a first price auction of a single good with reserve price $r$, the seller observes the bids of participating buyers, and then allocates the good to the buyer with highest bid if that bid is above or equal to the reserve price. The winning buyer's payment is equal to their bid.
\end{definition}

\subsection{Heterogeneous lookahead behavior}
We define heterogeneous forward-looking behavior of buyers by considering buyers who may be myopic or $k$-lookahead for different values of $k$. A myopic ($k$-lookahead) buyer is defined as a buyer who optimizes her myopic ($k$-lookahead) utility in every round to decide the bid. Below, we give precise definitions of these. Intuitively, myopic buyers optimize their current round utility, while $k$-lookahead buyers ($k\ge 1$) optimize their total expected utility over the current and next $k$ rounds.

\begin{definition}[Buyer's myopic utility]
Under bidding strategies ${\bf s} = \{s_j(\cdot)\}_{j=1,\ldots, n}$, the myopic utility of buyer $i$ in round $t$, given private valuation $v_{i,t}$ and history $H_{t-1}$, is defined as
$$U^{[t,t]}_i( H_{t-1}, v_{i,t}, {\bf s}) = \E_{v_{j,t}\sim F, j\ne i} \left[v_{i,t} \cdot \alloc_{i}(H_{t-1}, {\bf b}_t) -  \payment_{i}(H_{t-1}, {\bf b}_t) ; \  b_{j,t}=s_j(H_{t-1}, v_{j,t}), \forall j\right]$$
\end{definition}
\begin{definition}[$k$-lookahead utility]
Under bidding strategies ${\bf s} = \{s_j(\cdot)\}_{j=1,\ldots, n}$, the $k$-lookahead utility of buyer $i$ in round $t$, given private valuation $v_{i,t}$ and history $H_{t-1}$, is defined as
\begin{align*}
U^{[t,t+k]}_i\bigg( H_{t-1}, v_{i,t}, {\bf s}\bigg)
= \E_{v_{j,t}, j\ne i, {\bf v}^{[t+1, t+k]}}\sum_{r=0}^k \E_W\left[U^{[t+r,t+r]}_i\bigg( (H_{t-1}, W_{[t,t+r-1]}), v_{i,t+r}, {\bf s}\bigg)\right] 
\end{align*}

where $W_{t+r}$ denotes the vector of realized bids, allocation and payments in round $t+r$, $W_{[t,t+r-1]}:=\{W_t, \ldots, W_{t+r-1}\}$. And ${\bf v}^{[t+1, t+k]}$ denotes the realizations of private valuations for all bidders from time $t+1$ to $t+k$, i.e., ${\bf v}^{[t+1, t+k]}=\{v_{\tau,j}, \tau=t+1,\ldots, t+k,j=1,\ldots, n\}$. 

\end{definition}

\paragraph{Undominated strategies.}
\label{sec:undominated}
We provide revenue guarantees for our mechanism under an assumption that all myopic or $k$-lookahead buyers play ``undominated strategies". This is a significantly more robust notion of rationality than a Nash equilibrium. 

\sacommentMinor{removed:
\paragraph{Traditional one round auction definition.} A strategy
$s_i'$ is a dominated strategy for buyer $i$ if $\exists s_i$ such that 
$$\forall v_i, {\bf s}_{-i}:  \ \ \mathbb{E}_{{\bf v}_{-i}}[u_i(v_i, s_i(v_i), {\bf s}_{-i}({\bf v}_{-i}))] \ge \mathbb{E}_{{\bf v}_{-i}} [u_i(v_i, s_i'(v_i), {\bf s}_{-i}({\bf v}_{-i}))]$$
and
$$\exists v_i, {\bf s}_{-i}:  \ \ \mathbb{E}_{{\bf v}_{-i}}[u_i(v_i, s_i(v_i), {\bf s}_{-i}({\bf v}_{-i}))] > \mathbb{E}_{{\bf v}_{-i}} [u_i(v_i, s_i'(v_i), {\bf s}_{-i}({\bf v}_{-i}))]$$
where $u_i(v_i, b_i, {\bf b}_{-i})$ denotes the utility of buyer $i$.
}
\begin{definition}[Dominated strategies for $k$-lookahead buyers.]
In our setting for $k$-lookahead buyer, a strategy $s_i'$ is a dominated strategy at time $t$ under history $H_{t-1}$ if $\exists s_i$ such that 
$$\forall v_{i,t}, {\bf s}_{-i}:  U^{[t,t+k]}_i\bigg( H_{t-1}, {v}_{i,t}, s_i, {\bf s}_{-i}\bigg) \ge U^{[t,t+k]}_i\bigg( H_{t-1}, v_{i,t}, s'_i, {\bf s}_{-i}\bigg) $$
$$\exists v_{i,t}, {\bf s}_{-i}:  U^{[t,t+k]}_i\bigg( H_{t-1}, {v}_{i,t}, s_i, {\bf s}_{-i}\bigg) > U^{[t,t+k]}_i\bigg( H_{t-1}, v_{i,t}, s'_i, {\bf s}_{-i}\bigg) $$
\end{definition}
For myopic buyers, above definition applies with $k=0$.




\subsection{Heterogeneous learning behavior}
We consider learning buyers as those who do not know (or do not trust) the seller's mechanism, in particular the seller's allocation and payment function, in order to be able to precisely evaluate their current and future utility. 
Instead, a learning buyer uses the past outcomes to {\it learn} how to bid. We formalize the notion of learning buyer using the experts learning framework \cite{Freund1995}. 
\sacommentMinor{removed: In general, the bidding strategy used at time $t$ can be an arbitrary mapping from historical information and current valuation $v_{i,t}$ to bid. However, being able to learn the best strategy in such an arbitrary set makes too strong an assumption on learning abilities of the buyer. We consider, as a set of expert strategies, a finite collection $E$ of mappings from the current {\it state} information 
and current valuation to a bid, i.e., set of experts 
\begin{equation}
\label{eq:experts}
E = \{f:[{\cal S}] \times \bar{V} \rightarrow \bar{V}\};
\end{equation}
Here, state is defined as a piece of information determined by the history of the buyer, and may depend on the mechanism. We will use a set of experts defined by 2-bit state for our results. $\bar{V}$ is a  discretized (to arbitrary accuracy) range of valuations, in order to obtain a finite set of experts, and ${\cal S}$ is the set of possible states. Given buyer i's state $S_{i,t}$ and buyer's private valuation $v_{i,t}$, an expert strategy $f\in E$ suggests  bid $b_{i,t}=f(S_{i,t}, v_{i,t})$. 
}
 A learning buyer uses a learning algorithm in order to learn to bid in a way that its total utility is close to that achieved by the best single expert among a set of expert bidding strategies $E$. (Recall from Definition~\ref{def:strategy}, a bidding strategy is an arbitrary mapping from history and valuation to bid). We formalize different levels of learning sophistication among buyers by considering two classes of learning algorithms, as described below. \\ 

\paragraph{No-regret learner:}  A no-regret learning buyer $i$ uses a no-regret learning algorithm to decide bid $b_{i,t}$ at time $t$. 
The `reward' (in no-regret learning terminology) at time $t$ on making a bid $b_{i,t}=b$ is given by the buyer's $t$\textsuperscript{th}-round utility, determined by the  mechanism's output depending on other buyers' bids as well as the history. That is, on making a bid $b$, the learner's reward at time $t$ is given by a function $g_{t}(b)$ defined as 
\[g_{t}(b) :=v_{i,t} x_{i}(H_{t-1}, b, {\bf b}_{-i,t}) - p_{i}({H}_{t-1}, b, {\bf b}_{-i,t}) \] 
Regret is defined as the difference between buyer's total reward and that of the best expert $f \in E$ \emph{in hindsight}:
\begin{equation}
\label{eq:no-regret} 
\text{Regret}(T) = \max_{f\in E}\sum_{t=1}^T g_{t}(f(H_{t-1}, v_{i,t})) - \sum_{t=1}^T g_{t}(b_{i,t}) 
\end{equation}
A no-regret learning buyer uses a bidding strategy such the above regret is $o(T)$ under every trajectory of bids and private valuations.  Note that such a learner is solving an adversarial bandit problem, since the learner only observes the value of function $u_{i,t}(\cdot)$ on the bid $b_{i,t}$ used by the buyer. When the number of experts 
$N$ is finite, there are efficient and natural algorithms (e.g., EXP3 algorithm 
based on multiplicative weight updates \cite{Auer2003}) that achieve $O(\sqrt{NT\log N})$ 
regret. \\

\paragraph{No-policy-regret learner:} This more sophisticated buyer uses a no-{\it policy-regret} learning algorithm (following definition of policy regret in \cite{ADT12}). 
An important distinction from the definition of regret in the previous paragraph is that now the total reward of the best expert must be evaluated over the {\it trajectory} of adversarial inputs (i.e., history and other buyers' bids) in response to the bids made by the expert. To make explicit the dependence of $t^{th}$ utility reward on the trajectory of past decisions through history of outcomes and other buyers' strategic response, let us denote the reward function for round $t$ as $g_{t}(b, H_{t-1})$. Let $s_{j,t}$ denote the strategy used by buyer $j\ne i$ at time $t$ and $v_{j,t}$ is the private valuation of buyer $j$.  Then, the learner's expected reward in round $t$ is defined as:
\[g_{t}(b, H_{t-1}) :=\Ex_{v_i\sim F}[v_i x_{i}(H_{t-1}, b, {\bf b}_{-i,t}) - p_{i}({H}_{t-1}, b, {\bf b}_{-i,t}) \text{ where } b_{j,t} := {\bf s}_{j,t}(H_{t-1}, v_{j,t})].\]
Then, for any sequence of other buyers' valuations ${\bf v}_{-i,t}$ and strategies ${\bf s}_{-i,t}$ for $t=1,\ldots, T$ 
policy-regret of such a buyer is defined against the  best expert  $f\in E$ in hindsight:
\begin{equation}
\label{eq:buyer-policy-regret}
\text{Policy-Regret}(T) = \max_{f\in E} \sum_{t=1}^T g_{t}(f(H'_{t-1},v_{i,t}), {H}'_{t-1}) - \sum_{t=1}^T g_{t}(b_{i,t}, {H}_{t-1})
\end{equation}
where $b_{i,t}$ denotes the bid made by buyer at time $t$,  $f(H'_{i,t},v_{i,t})$ denotes the bid that {\it would} be made by the expert under counterfactual trajectory,  
${H}'_{1}, \ldots, {H}'_{T}$ denotes the (possibly randomized) counterfactual trajectory  of 
history that would be observed in response to using the  bids suggested by the expert, instead of the original bids $b_{i,t}$. A no-policy-regret learning buyer uses a bidding strategy such that the above policy-regret is guaranteed to be $o(T)$ under any sequence of other buyers' valuations ${\bf v}_{i,t}$ and strategies ${\bf s}_{-i,t}$ for $t=1,\ldots, T$. See Appendix~\ref{sec:nprexistence} for a short note on the existence of policy regret learning algorithms.




\section{Repeated First Price Auction Mechanism}
\label{sec:mechanism}
Algorithms~\ref{alg:one}-\ref{alg:good}  contain the formal description of our mechanism.  We design a dynamic first price auction mechanism that is conducted in sequential rounds $t=1, \ldots, T$. In every round, the mechanism partitions the $n$ buyers into two categories: good state  buyers and bad state buyers.  All buyers start in good state. The buyers may be moved by the mechanism from good state to bad state over time but once in bad state, a buyer remains there for the remaining rounds.\footnote{One can think of the time period $T$ as the number of auctions in a day, and reset all buyers to good state at the beginning of the next day when another $T$ auctions are run. Alternatively it also possible to design a mechanism that permits the movement back to a good state for initial few rounds, but for clarity in exposition, so we  choose to not do that here.}  In any round, the current set of buyers in  good state and bad state are denoted by $G$ and $B$ respectively. 
The mechanism uses these states to track buyer behavior and incentivize lookahead or learning buyers to stay in a good state, in order to extract the desired revenue from both sophisticated and naive buyers.  

The mechanism proceeds in epochs, each consisting of multiple rounds. In a given round, the mechanism either conducts a first price auction with a reserve price among the good state buyers, or a first price auction with a (different) reserve price among the bad state buyers.  Specifically, an epoch consists of \mbox{$\epl :=  \eplVal = O(m_g)$} rounds where $\threshold=\Hdelta$ and $m_g = \max(1, |G|)$ is the number of good state buyers in the beginning of the epoch. 
During an epoch, the mechanism first  runs the \badauction \ subroutine (Algorithm~\ref{alg:bad}), which consists of $\rho \epl$ rounds of auctions among bad state buyers. It then runs the \goodauction \ subroutine (Algorithm~\ref{alg:good}),  which consists of $(1- \rho)\epl$ rounds of auctions among good state buyers.  Some good state buyers may be moved to bad state during the \goodauction. The reserve prices $\resg$ and $\resb$ for good and bad state auctions depend on parameters $m_g, m_b \in [n]$ which are set at the beginning of an epoch, and remain fixed throughout an epoch. 


 \begin{algorithm}[H]
	\caption{Robust repeated first price auction among $n$ buyers}
\begin{algorithmic}
	\INPUT horizon $T$,  
	parameters $\rho, \epsilon, \delta \in (0,1)$, thresholds $\threshold=\Hdelta$
	\STATE Initialize all buyers' state to the good state:  $G = \{1, \ldots, n\} \text{ and } B = \{\}$
	\STATE Repeat until horizon of $T$ rounds is reached:
	\STATE \textbf{for} epochs $\ep=1,2, \ldots, 
	$ \textbf{do} \\
	\STATE  \qquad Set $m_b = \max(n/2, |B|)$,  $m_g = \max(1, |G|)$, \mbox{$\epl := \eplVal$}
	\STATE  \qquad  Run  $\badauction$  among buyers in $B$ for $\epl^b=\rho \epl$ rounds
		\STATE \qquad Run  $\goodauction$  among buyers in $G$ for $\epl^g=(1-\rho) \epl$ rounds
		\STATE  \qquad \qquad (this may update $B$ and $G$)
	\end{algorithmic}
	\label{alg:one}
\end{algorithm}

In the first 
$\epl^b=\rho \epl$ rounds of the epoch, \badauction \ runs a first price auction with reserve price $\resb$ among bad state buyers in $B$, as described in Algorithm~\ref{alg:bad}.  
 Here $\resb:=\myeprice-\frac{\epsilon}{n} \qum$, with $\myeprice=F^{-1}(1-\prob_{m_b})$, $\prob_{m}$ being the probability that a buyer wins in a Myerson auction with $m$ buyers (e.g., see \cite{Mye81}); and $\qum:=F^{-1}(1-1/m_b)$ the $m_b^{th}$ quantile. In fact, for any $m$, $\prob_m\le \frac{1}{m}$, so that $\myeprice\ge \qum$ and $\resb \ge (1-\frac{\epsilon}{n}) \myeprice \ge (1-\frac{\epsilon}{n}) \qum$. 

  \begin{algorithm}[H]
	\caption{\badauction: first price auction among bad state buyers}
\begin{algorithmic}
	\INPUT buyers $B$, number of rounds $\epl^b$, and parameter $m_b \in [n]$
		\STATE  \textbf{for}  $\epl^b$ rounds \textbf{do} 
			\STATE  \qquad  \textbf{Auction:} Run a first price auction with reserve price $\resb = \myeprice-\frac{\epsilon}{n} \qum$ among buyers $B$. 
			\STATE \qquad  Let $\{\bidj\}_{j\in B}$ be the set of bids received.
			\STATE  \qquad 	\textbf{if} $\max_{j \in B} \bidj \geq \resb$  \textbf{then}
			   \STATE  \qquad 	     \qquad Allocate the good to buyer $i = \argmax_{j \in B} \bidj$ with highest bid  and charge $\bidi$
	\end{algorithmic}
	\label{alg:bad}
\end{algorithm}

In each of the remaining $\epl^b=(1-\rho) \epl$ rounds of the epoch, the \goodauction \ subroutine (Algorithm~\ref{alg:good}) 
runs a first price auction with reserve price $\resg:= (1-\epsilon)\quplus_{m_g}$ among buyers currently in good state (i.e., buyers in $G$). Here, $\quplus_m:=\Ex_{v\sim F}[v | v\ge q_m]$, with $q_m=F^{-1}(1-1/m)$.

The  \goodauction \ subroutine uses a third state, called the rest state, which is used to temporarily rest a buyer, i.e., not allow that buyer to participate in the  remaining auctions in that epoch. The set of buyers in the rest state in the current round is denoted by $R$. In each round, after the auction a buyer may be moved from good state to either bad or rest state in the following ways:
 \begin{itemize}
     \item The mechanism considers the number of uncleared auctions $U$ so far in this epoch, i.e., the number of auctions during this instance of \goodauction, where all the participating bids were lower than reserve price. If this number is greater than or equal to $\frac{m_g\threshold}{(1-\delta)}$, then any good state buyer ($i\in G$) whose bid $b_{i}$ in this round was smaller than the reserve price, is moved to bad state. 
     \item For every buyer currently in good state ($i\in G$), the mechanism considers the number of allocations $A_i$ received by the buyer so far in this epoch. If this is $\ge \threshold$, the buyer is moved to the rest state. 
 \end{itemize}
\sacommentMinor{added:} In the first step, the mechanism checks if the number of uncleared auctions is significantly above the statistically expected number. And, if so, from there on, the mechanism punishes every participating buyer who bids below reserve price. This step is aimed to ensure that the lookahead buyers are incentivized to bid above reserve price in rounds where their private valuations are high enough. To understand the intuition behind the second step, observe that given the epoch length of $O(m_g)$, statistically, any given buyer is expected to have the highest valuation among $m_g$ buyers for roughly a constant number of steps in every epoch. Thus, the second step is intended to ensure that a buyer does not win too many auctions (perhaps at cost of negative immediate utility for some rounds) in order to deprive other buyers of allocations and potentially cause the mechanism to move them to bad state. 


 \begin{algorithm}[H]
\caption{\goodauction:  first price auction among good state buyers}
\begin{algorithmic}
	\INPUT  buyers $G$ , number of rounds $\epl^g$, parameters $\epsilon, \rho, \delta \in (0,1)$ and $m_g \in [n]$, threshold $\threshold$
\STATE Initialize state $R = \{\}$ and counters    $\uncleared =0$
	 and, for all $i \in G$,   $\allocated_{i}=0$ 
		\STATE  \textbf{for}  $\epl^g$ rounds \textbf{do} 
			\STATE  \qquad  \textbf{Auction:} Run a first price auction with reserve price $\resg = (1-\epsilon)\quplus_{m_g}$ among buyers in $G$
		\STATE  \qquad 	\qquad Let $i = \argmax_{j \in G} \bidj$ be the buyer with highest bid 
			\STATE  \qquad 	\qquad \textbf{if} $b_i\geq \resg$  \textbf{then}
			   \STATE  \qquad 	\qquad         \qquad Allocate the good to buyer $i $ and charge $\bidi$\\
			 \STATE  \qquad 	\qquad           \qquad Update  the number of allocations  to buyer $i$: $\allocated_{i}=\allocated_{i}+1$
	\STATE  \qquad 	\qquad \textbf{else} 
	\STATE  \qquad 	\qquad  \qquad Update the number of uncleared auctions: $\uncleared =\uncleared + 1$
		 \STATE  \qquad 	\textbf{Move buyers between states:}
			\STATE  \qquad \qquad  \textbf{if} $\uncleared \ge \frac{m_g \threshold}{1 - \delta}$  \textbf{then} move buyers  in $G$ with bid lower than $\resg $ to $B$: 
			  \STATE  \qquad \qquad  \qquad  Update $B = B \cup \{i \in G : b_i < \resg\}$ and  $G=G\backslash\{i : b_i < \resg\}$
			\STATE  \qquad \qquad \textbf{if}  $\allocated_{i} \ge \threshold$  \textbf{then} move buyer $i$ to rest state:
		       \STATE  \qquad \qquad  \qquad   Update $R=R\cup\{i\}$ and  $G=G\backslash\{i\}$

		Move all rest state buyers back to good state: $G=G\cup R$
	\end{algorithmic}
	\label{alg:good}
\end{algorithm}

\section{Revenue Analysis: Main Result}

Our main result is that the mechanism presented in the previous section extracts a constant factor of optimal revenue  $\quplus_{\nsoph} + \revmye(\nnaive)$ from $\nsoph$ sophisticated buyers and $\nnaive$ naive buyers.  To formally state this result, we first  define the sophisticated buyers and naive buyers. This involves defining the set of experts used by learning buyers. In general, an expert bidding strategy can be any arbitrary mapping from historical information and current valuation  to bid. However, being able to learn the best strategy in such an arbitrary set makes too strong an assumption on learning abilities of the buyer. In fact, it is sufficient for our mechanism to have learning buyers that can compete against a restricted set of experts, as defined below. 

\begin{definition}[Expert set $E$]
\label{def:E}
Let $h_{i,t-1}$ be a fixed size projection of history $H_{t-1}$ containing the following information: the buyer $i$'s state (whether it is bad or good/rest) in round $t$, the number of buyers in good and bad state, and the number of uncleared auctions so far in the current epoch. Let ${\cal H}$ denote the set of ($2\times n \times \eplmax$) possible values of $h_{i,t}, \forall i,t$. Then, set of experts $E$ is defined as mappings from this projected history 
and current valuation to a bid, i.e., $E = \{f:{\cal H} \times \bar{V} \rightarrow \bar{V}\}.$

Here, $\bar{V}$ is a  discretized (to arbitrary accuracy) range of valuations, in order to obtain a finite set of experts. 
Given $h_{i,t-1}$, and valuation $v_{i,t}$, an expert strategy $f\in E$ suggests  bid $b_{i,t}=f(h_{i,t-1}, v_{i,t})$ to buyer $i$ in round $t$. 
\end{definition}

We are now ready to formally define sophisticated and naive buyers.

\begin{definition}[Sophisticated and Naive buyers]
Sophisticated buyers are defined as the buyers who are   either $k$-lookahead, for $k \ge \frac{80\log(\epsilon^{-1})n}{\epsilon^3(1-\epsilon)^2(1 - \rho)} = \Theta(n)$, or no-policy-regret learners against some set of experts containing $E$.  Na\"{i}ve buyers are defined as buyers who are either myopic, or are no-regret learners against some set of experts containing $E$.
\end{definition}

\begin{restatable}{rThm}{thmmain}
\label{thm:main}
Assuming all myopic and $k$-lookahead buyers play undominated strategies, 
 the expected per round revenue of the mechanism described in Algorithm~\ref{alg:one}-\ref{alg:good}, with $\epsilon\in (0,1)$, $\delta = \epsilon$,  $\rho \le\rhoCompactBound$, is at least
$$\Theta(1) \left( \quplus_{\nsoph}  +  \Rev^{\Mye}(\nnaive) \right) -o(1),$$
where $\nsoph, \nnaive$ is the number of sophisticated and naive buyers, respectively. 
More precisely, the expected per round revenue is at least
   $$(1-\epsilon)\frac{1}{4}\cdot  \quplus_{\nsoph}  +\frac{\rho(1-\epsilon)}{2}  \left(1-\frac{1}{e}\right)\Rev^{\Mye}(\nnaive)  -o(1)$$
 where $\quanplus=\Ex_{v\sim F}[v|v\ge \quan]$, with $\quan=F^{-1}(1-1/n)$ being the $n^{th}$ quantile for the valuation distribution, and $\quplus_0 = 0$.  $\Rev^{\Mye}(n)$ is the optimal revenue in a single-item auction with $n$ buyers.
\end{restatable}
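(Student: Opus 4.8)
The plan is to reduce the bound to one structural fact about the state dynamics and then account for the revenue epoch by epoch. The fact --- \emph{(A)}: under the stated rationality assumptions every sophisticated buyer stays in the good state for all $T$ rounds --- together with the observation that a buyer changes state at most once (so all but an $o(T)$-round prefix of rounds lies in ``steady-state'' epochs) reduces the theorem to the following per-epoch claim: in an epoch whose good-state set is $G$ and bad-state set is $B$, the \goodauction contributes at least $\tfrac{(1-\rho)(1-\epsilon)}{2}\quplus_{|G|}$ per round and the \badauction contributes at least a constant times $\Rev^{\Mye}(|B|)$ per bad round. Because $|G|\ge\nsoph$ and $B$ contains only naive buyers (both by (A)), summing these over epochs, together with a short case analysis on how many naive buyers remain in the good state, yields the stated bound.

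The heart of the argument --- and the step I expect to be the main obstacle --- is Fact (A). Fix a sophisticated buyer $i$ that begins an epoch in the good state. For a $k$-lookahead buyer with $k\ge\ksoph$ I would show that any strategy which with positive probability triggers a move of $i$ to the bad state during the epoch is dominated by the modification that bids at least $\resg=(1-\epsilon)\quplus_{|G|}$ in every round where the uncleared counter $\uncleared$ has reached $\tfrac{|G|\threshold}{1-\delta}$, which keeps $i$ in the good state. The one-shot cost of this modification is at most $\resg$ per offending round, and the allocation-rest rule caps the number of offending rounds in which $i$ actually wins at $\threshold$ (after $\threshold$ allocations $i$ is rested for the remainder of the epoch), so the total cost is $O(\threshold\,\quplus_{|G|})$ per epoch. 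The benefit is the continuation utility of remaining a good-state buyer, which must be lower bounded \emph{uniformly over every other buyer's strategy} --- here the rest rule is used in $i$'s favor, since the other $|G|-1$ good-state buyers can together win at most $(|G|-1)\threshold$ of the $(1-\rho)\epl=\tfrac{2\threshold|G|}{1-\delta}$ good-state rounds, so no adversarial bidding coalition can crowd $i$ out of a constant fraction of its high-valuation rounds; a concentration argument (with $\threshold=\Hdelta$, $\delta=\epsilon$, and $\epl=\eplVal$) then shows $i$ can secure $\Omega(\epsilon\,\quplus_{|G|})$ of surplus in enough rounds per epoch that, over the lookahead window (which, by the choice $\ksoph=\Theta(n)$, spans at least $\Kbound$ epochs), the continuation utility exceeds the one-shot cost, so the move-triggering strategy is dominated. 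For a no-policy-regret learner the same comparison is realized by the expert in $E$ that plays the $(1-\epsilon)$-shaded valuation while in the good state and bids $0$ while in the bad state: this expert is never moved to the bad state and earns $\Omega(T)$ more reward than any strategy that spends a constant fraction of rounds in the bad state (where $\resb$ leaves little surplus and the good-state auctions are permanently inaccessible), so $o(T)$ policy regret forces the learner to stay in the good state. All the delicacy is in making the continuation-value lower bound adversary-independent; it is exactly the rest rule, together with the tuned values of $\threshold,\epl,\rho,\ksoph$, that makes this possible.

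Given Fact (A), the revenue accounting is routine. The \goodauction of an epoch runs $(1-\rho)\epl=\tfrac{2\threshold|G|}{1-\delta}$ rounds with reserve $\resg$, and its uncleared threshold $\tfrac{|G|\threshold}{1-\delta}$ is exactly half of this: if $\uncleared$ stays below the threshold there are at least $\tfrac{(1-\rho)\epl}{2}$ cleared rounds, and if it reaches the threshold then, by Fact (A), a sophisticated buyer is still present and must bid at least $\resg$ every subsequent round, so every subsequent round clears and again at least $\tfrac{(1-\rho)\epl}{2}$ rounds clear; each cleared round pays at least $\resg=(1-\epsilon)\quplus_{|G|}$, giving at least $\tfrac{(1-\rho)(1-\epsilon)}{2}\quplus_{|G|}\ge\tfrac14(1-\epsilon)\quplus_{|G|}$ per round once $\rho\le\rhoCompactBound$. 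In each \badauction round the naive buyers in $B$ face a first-price auction with reserve $\resb\ge(1-\tfrac{\epsilon}{n})\myeprice$, $\myeprice=F^{-1}(1-\prob_{m_b})$, $m_b=\max(n/2,|B|)$; since myopic buyers playing undominated never overbid, a standard single-item revenue bound shows the auction's revenue is a constant fraction of the Myerson revenue of the participating population (for learners the no-regret guarantee against $E$ does at least as well), which combined with monotonicity of $\Rev^{\Mye}$ handles the $m_b=n/2$ case and yields at least $\tfrac{(1-\epsilon)}{2}(1-\tfrac1e)\Rev^{\Mye}(|B|)$ per bad round. Finally, writing $j$ for the number of naive buyers still in the good state, so $|G|=\nsoph+j$ and $B$ contains $\nnaive-j$ naive buyers, a short computation (using only $\rho\le\rhoCompactBound$ and monotonicity of $\quplus_{(\cdot)}$ and $\Rev^{\Mye}$) shows the two contributions sum to at least $\tfrac14(1-\epsilon)\quansplus+\tfrac{\rho(1-\epsilon)}{2}(1-\tfrac1e)\Rev^{\Mye}(\nnaive)$ --- when $j>0$ the excess $\quplus_{\nsoph+j}-\quansplus$ in the good-state term absorbs the lost bad-state revenue, and when $j=0$ the reserve-tuned bad-state auction suffices. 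Summing over all epochs and absorbing the $o(n^2/T)$ transient loss and the $o(T)$ regret terms into the $-o(1)$ completes the proof.
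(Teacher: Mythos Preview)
Your overall architecture matches the paper's: establish that sophisticated buyers remain in the good state (your Fact (A)), then account for revenue epoch by epoch from good-state and bad-state auctions separately, and finally combine via a case analysis. The continuation-value argument you sketch for (A), leaning on the rest rule to make the good-state utility lower bound adversary-independent, is essentially the paper's Lemmas~\ref{lem:lbutility}--\ref{lem:dominating} and~\ref{lem:policy}. But two of the downstream steps, as you have written them, do not go through.

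First, the bad-state revenue. You write ``since myopic buyers playing undominated never overbid, a standard single-item revenue bound shows the auction's revenue is a constant fraction of the Myerson revenue.'' This is the wrong direction: in a first-price auction, never-overbidding is compatible with bidding zero, which yields no revenue. What you need (and what the paper proves in Lemma~\ref{lem:myopic} for myopic buyers and Lemma~\ref{lem:noregret} for no-regret learners) is that whenever a bad-state buyer has value at least $\myeprice$ it bids \emph{at least} the reserve $\resb$; this is the undominated direction because bidding below $\resb$ yields zero while bidding exactly $\resb$ yields nonnegative (and sometimes strictly positive) utility. Only with that direction can you invoke the posted-price comparison to Myerson (Lemma~\ref{lem:revenuebad}).

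Second, and more substantively, your final case analysis fails. You claim that when $j>0$ naive buyers remain in good state, ``the excess $\quplus_{\nsoph+j}-\quplus_{\nsoph}$ in the good-state term absorbs the lost bad-state revenue.'' But this excess can be zero: if $F$ is (close to) a point mass, $\quplus_m$ is constant in $m$ while $\Rev^{\Mye}(\nnaive)$ is not. Monotonicity of $\quplus_{(\cdot)}$ and $\Rev^{\Mye}$ alone does not deliver the needed inequality. The paper instead splits on whether $|B|\ge n/2$. When $|B|<n/2$ it drops bad-state revenue entirely, uses $|G|\ge n/2$ to get $\quplus_{|G|}\ge\quplus_{n/2}\ge\tfrac12\quplus_{\nnaive}\ge\tfrac12\Rev^{\Mye}(\nnaive)$, and diverts \emph{half} of the good-state revenue to cover the $\Rev^{\Mye}(\nnaive)$ term (this is where the $\tfrac14$ comes from, not from $\rho\le\tfrac12$). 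When $|B|\ge n/2$ it uses $\quplus_{m_g}\ge\tfrac{m_g}{\nnaive}\Rev^{\Mye}(\nnaive)$, $\tfrac{\Rev^{\Mye}(|B|)}{|B|}\ge\tfrac{\Rev^{\Mye}(\nnaive)}{\nnaive}$, and the split $|G|\ge\tfrac{|G|}{2}+\tfrac{j}{2}$. The quantitative input in both cases is $\tfrac{1}{m}\quplus_m\ge\tfrac{1}{m'}\quplus_{m'}$ for $m\le m'$ (Lemma~\ref{lem:mgmb}), not bare monotonicity of $\quplus_m$.

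A smaller point: in your good-state revenue argument you say that once $\uncleared$ hits its threshold, ``by Fact (A), a sophisticated buyer is still present and must bid at least $\resg$.'' But that buyer may already be in the rest state. The paper's Lemma~\ref{lem:revGood} avoids this by a pure counting argument on the buyers who \emph{end} the epoch in $G$ (each such buyer was either rested, hence paid $\resg$ at least $\threshold$ times, or bid $\ge\resg$ in every post-threshold round), without appealing to sophistication at all.
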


The proof of Theorem~\ref{thm:main} consists of four parts.
 We first give revenue and utility bounds that apply to any buyer, then characterize  undominated strategies for myopic/lookahead buyers, and no-regret strategies for learners, and finally combine these parts. We give an overview  of each  part here and defer lemma statements and their proofs to the appendix.

\paragraph{General revenue and utility analysis (Appendix~\ref{sec:appgeneral}).}   
 We first establish in Lemma~\ref{lem:revGood} that, if $G$ is the set of good state buyers at the end of an epoch, then the expected  revenue from \goodauction \ during that epoch is at least  $|G| \threshold  \resg$. This lower bound on revenue from good state buyers is obtained by observing that each  buyer who ends an epoch in good state must have either been allocated  the item (and paid at least $\resg$) during at least $\threshold$ rounds of this epoch to be moved to a rest state, or must have bid at least $\resg$ at each round where $\uncleared \geq m_g H/(1 -\delta)$. In Lemma~\ref{lem:revenuebad}, we show that if the bid of bad state buyers $B$ is at least  $\resb$ when their value is larger than $\resb$, then the expected revenue per round of \badauction \ is at least $(1 - \epsilon)(1-1/e)\frac{|B|}{m_b} \revmye(m_b)$.

 A main part of the overall revenue analysis is to argue that sophisticated buyers are incentivized to remain in  good state, irrespectively of other buyers' bids. To show this, we establish a lower bound on the utility achievable in good state and an upper bound  on the utility achievable in bad state. To establish the lower bound, we analyze a strategy called the good strategy  $s^g$ (Definition~\ref{def:goodstrategy}) that never moves a buyer to the bad state. When $U < m_g H / (1 - \delta)$, $s^g$ bids $\resg=(1-\epsilon)\quplus_{m_g}$ if $v_{i}\ge \qu_{m_g}$, and  $0$ otherwise. When $U \geq m_g H / (1 - \delta)$, $s^g$ bids $\resg$. In Lemma~\ref{lem:lbutility}, we lower bound the expected utility obtained by strategy $s^g$ over an epoch. The crucial component of the mechanism which allows this bound is temporarily moving buyers who have already been allocated enough (more than statistically expected) number of items to the rest state. This temporarily removes such buyers from good state auctions, and guarantees to any buyer $i \in G$  a minimum number of rounds in each epoch where $i$ can get the item if it bids above the reserve price. In Lemma~\ref{lem:ubutility}, 
 we upper bound the expected utility achievable by any bad state buyer.

\paragraph{Undominated strategies for buyers with heterogeneous lookahead attitudes (Appendix~\ref{sec:lookahead}).} The main lemma  for this part (Lemma~\ref{lem:dominating}) shows that a $k$-lookahead buyer, for $k$ large enough, never enters the bad state. To show this, we consider a round where a buyer $i \in G$ faces the threat to be sent to a bad state if it bids below $\resg$. In Lemma~\ref{lem:lbutilitylookahead}, we lower bound the $k$-lookahead utility obtained by strategy $s^g$, which maintains $i$ in good state, in such a round.  Lemma~\ref{lem:ubutilitylookahead} then upper bounds the $k$-lookahead utility of any strategy bidding below $\resg$ in such a round, which would send the buyer to a bad state.
Lemma~\ref{lem:dominating} then combines these two bounds to show that any strategy that sends a $k$-lookahead buyer to a bad state is dominated by strategy $s^g$. A main difficulty in combining these two lemmas is that the epoch lengths and the reserve prices vary at each epoch, and we need to compare utilities from different epochs.
We show in Lemma~\ref{lem:myopic} that a myopic buyer bids at least the reserve price $\resb$ when it has value at least $\resb$ in bad state.

\paragraph{Strategies of no-regret buyers with heterogeneous learning behaviors (Appendix~\ref{sec:learning}).} We show in Lemma~\ref{lem:policy} that a buyer that goes to a bad state has high policy-regret compared to an expert that plays strategy $s^g$, which implies that a no-policy regret learner must remain in good state in all but $o(T)$ rounds. A difficulty here is to argue that there is gap between the utility a buyer going to a bad state and the utility of an expert following the good strategy, where the utilities are evaluated over different trajectories of adversarial inputs.
In Lemma~\ref{lem:noregret}, we give a condition under which a no-regret learner in bad state must bid at least the reserve price $\resb$ when its value is larger than $\resb$ in all but $o(T)$ rounds. An important subtlety for no-regret learners is that due to the other buyers, a learner is not guaranteed to win a bad state auction and obtain positive utility when it bids at least $\resb$.

\paragraph{Main result (Appendix~\ref{sec:mainresult}).} We combine the three previous parts  to lower bound the revenue achieved by the mechanism and obtain Theorem~\ref{thm:main}. A last non-trivial argument needed is that if a naive buyer remains in good state, we  obtain at least a much revenue from that buyer as if it was in bad state, regardless of how many buyers are in good and bad state.

\newpage 

\bibliographystyle{alpha}
\bibliography{klookahead}

\begin{thebibliography}{ACBFS03}

\bibitem[ACBFS03]{Auer2003}
Peter Auer, Nicol\`{o} Cesa-Bianchi, Yoav Freund, and Robert~E. Schapire.
\newblock The nonstochastic multiarmed bandit problem.
\newblock {\em SIAM J. Comput.}, 32(1):48--77, January 2003.

\bibitem[ada]{adageGoogle}
adage.com.
\newblock Google's ad manager will move to first-price auction.
\newblock
  \url{https://adage.com/article/digital/google-adx-moving-a-price-auction/316894/}.

\bibitem[ade]{rubicon}
adexchanger.com.
\newblock Rubicon joins first-price auction club; diageo is latest brand to
  demand more transparency.
\newblock \url{https://adexchanger.com/ad-exchange-news/tuesday-12122017//}.

\bibitem[ADH16]{ADH16}
Itai Ashlagi, Constantinos Daskalakis, and Nima Haghpanah.
\newblock Sequential mechanisms with ex-post participation guarantees.
\newblock In {\em Proceedings of the 2016 {ACM} Conference on Economics and
  Computation, {EC} '16, Maastricht, The Netherlands, July 24-28, 2016}, pages
  213--214, 2016.

\bibitem[ADMS18]{ADMS18}
Shipra Agrawal, Constantinos Daskalakis, Vahab~S. Mirrokni, and Balasubramanian
  Sivan.
\newblock Robust repeated auctions under heterogeneous buyer behavior.
\newblock In {\em Proceedings of the 2018 {ACM} Conference on Economics and
  Computation, Ithaca, NY, USA, June 18-22, 2018}, page 171, 2018.

\bibitem[ADT12]{ADT12}
Raman Arora, Ofer Dekel, and Ambuj Tewari.
\newblock Online bandit learning against an adaptive adversary: from regret to
  policy regret.
\newblock In {\em ICML}. icml.cc / Omnipress, 2012.

\bibitem[AS13]{AS13}
Susan Athey and Ilya Segal.
\newblock An efficient dynamic mechanism.
\newblock {\em Econometrica}, 81(6):2463--2485, 2013.

\bibitem[Bat05]{Battaglini05}
Marco Battaglini.
\newblock Long-term contracting with markovian consumers.
\newblock {\em American Economic Review}, 95(3):637--658, 2005.

\bibitem[BB84]{BB84}
David~P. Baron and David Besanko.
\newblock Regulation and information in a continuing relationship.
\newblock {\em Information Economics and Policy}, 1(3):267 -- 302, 1984.

\bibitem[Bes85]{Besanko85}
David Besanko.
\newblock Multi-period contracts between principal and agent with adverse
  selection.
\newblock {\em Economics Letters}, 17(1–2):33 -- 37, 1985.

\bibitem[BLP06]{BLP06}
Moshe Babaioff, Ron Lavi, and Elan Pavlov.
\newblock Single-value combinatorial auctions and implementation in undominated
  strategies.
\newblock In {\em Proceedings of the Seventeenth Annual ACM-SIAM Symposium on
  Discrete Algorithm}, SODA '06, pages 1054--1063, 2006.

\bibitem[BML17]{BMP16}
Santiago~R. Balseiro, Vahab~S. Mirrokni, and Renato~Paes Leme.
\newblock Dynamic mechanisms with martingale utilities.
\newblock In {\em Proceedings of the 2017 {ACM} Conference on Economics and
  Computation, {EC} '17, Cambridge, MA, USA, June 26-30, 2017}, page 165, 2017.

\bibitem[BMLZ19]{BMLZ19}
Santiago~R. Balseiro, Vahab~S. Mirrokni, Renato~Paes Leme, and Song Zuo.
\newblock Dynamic double auctions: Towards first best.
\newblock In {\em Proceedings of the Thirtieth Annual {ACM-SIAM} Symposium on
  Discrete Algorithms, {SODA} 2019, San Diego, California, USA, January 6-9,
  2019}, pages 157--172, 2019.

\bibitem[BMSW18]{BMSW18}
Mark Braverman, Jieming Mao, Jon Schneider, and Matthew Weinberg.
\newblock Selling to a no-regret buyer.
\newblock In {\em Proceedings of the 2018 {ACM} Conference on Economics and
  Computation, Ithaca, NY, USA, June 18-22, 2018}, pages 523--538, 2018.

\bibitem[BS15]{BS15}
Dirk Bergemann and Philipp Strack.
\newblock Dynamic revenue maximization: A continuous time approach.
\newblock {\em Journal of Economic Theory}, 159, Part B:819 -- 853, 2015.
\newblock Symposium Issue on Dynamic Contracts and Mechanism Design.

\bibitem[CDKS16]{CDKS16}
Shuchi Chawla, Nikhil~R. Devanur, Anna~R. Karlin, and Balasubramanian Sivan.
\newblock Simple pricing schemes for consumers with evolving values.
\newblock In {\em Proceedings of the Twenty-Seventh Annual {ACM-SIAM} Symposium
  on Discrete Algorithms, {SODA} 2016, Arlington, VA, USA, January 10-12,
  2016}, pages 1476--1490, 2016.

\bibitem[CH00]{CL00}
Pascal Courty and Li~Hao.
\newblock Sequential screening.
\newblock {\em Review of Economic Studies}, 67(4):697--717, 2000.

\bibitem[CHMS10]{CHMS10}
Shuchi Chawla, Jason~D. Hartline, David~L. Malec, and Balasubramanian Sivan.
\newblock Multi-parameter mechanism design and sequential posted pricing.
\newblock In {\em Proceedings of the 42nd {ACM} Symposium on Theory of
  Computing, {STOC} 2010, Cambridge, Massachusetts, USA, 5-8 June 2010}, pages
  311--320, 2010.

\bibitem[dig]{fpera}
digiday.com.
\newblock Programmatic advertising is preparing for the first-price auction
  era.
\newblock
  \url{https://digiday.com/marketing/programmatic-advertising-readying-first-price-auction-era//}.

\bibitem[dKaS92]{KRS92}
J.~de~Kleer and O.~Raiman andMark Shirley.
\newblock One step lookahead is pretty good.
\newblock {\em Readings in Model-Based Diagnosis}, pages 138--142,, 1992.

\bibitem[DPS15]{DPS15}
Nikhil~R. Devanur, Yuval Peres, and Balasubramanian Sivan.
\newblock Perfect bayesian equilibria in repeated sales.
\newblock In {\em Proceedings of the Twenty-Sixth Annual {ACM-SIAM} Symposium
  on Discrete Algorithms, {SODA} 2015, San Diego, CA, USA, January 4-6, 2015},
  pages 983--1002, 2015.

\bibitem[ES07]{ES07}
Peter Eso and Balázs Szentes.
\newblock Optimal information disclosure in auctions and the handicap auction.
\newblock {\em Review of Economic Studies}, 74(3):705--731, 2007.

\bibitem[FS95]{Freund1995}
Yoav Freund and Robert~E. Schapire.
\newblock A decision-theoretic generalization of on-line learning and an
  application to boosting.
\newblock In {\em Proceedings of the Second European Conference on
  Computational Learning Theory}, EuroCOLT '95, pages 23--37, London, UK, UK,
  1995. Springer-Verlag.

\bibitem[FVB06]{FV06}
Drew Fudenberg and J~Miguel Villas-Boas.
\newblock Behavior-based price discrimination and customer recognition.
\newblock {\em Handbook on economics and information systems}, 1:377--436,
  2006.

\bibitem[ILPT17]{ILPT17}
Nicole Immorlica, Brendan Lucier, Emmanouil Pountourakis, and Samuel Taggart.
\newblock Repeated sales with multiple strategic buyers.
\newblock In {\em Proceedings of the 2017 {ACM} Conference on Economics and
  Computation, {EC} '17, Cambridge, MA, USA, June 26-30, 2017}, pages 167--168,
  2017.

\bibitem[KLN13]{KLN13}
Sham~M Kakade, Ilan Lobel, and Hamid Nazerzadeh.
\newblock Optimal dynamic mechanism design and the virtual-pivot mechanism.
\newblock {\em Operations Research}, 61(4):837--854, 2013.

\bibitem[LP17]{LiuP17}
Siqi Liu and Christos{-}Alexandros Psomas.
\newblock On the competition complexity of dynamic mechanism design.
\newblock {\em CoRR}, abs/1709.07955, 2017.

\bibitem[MLTZ16]{MPTZ16a}
Vahab~S. Mirrokni, Renato~Paes Leme, Pingzhong Tang, and Song Zuo.
\newblock Dynamic auctions with bank accounts.
\newblock In {\em Proceedings of the Twenty-Fifth International Joint
  Conference on Artificial Intelligence, {IJCAI} 2016, New York, NY, USA, 9-15
  July 2016}, pages 387--393, 2016.

\bibitem[MLTZ18]{MLTZ18}
Vahab~S. Mirrokni, Renato~Paes Leme, Pingzhong Tang, and Song Zuo.
\newblock Non-clairvoyant dynamic mechanism design.
\newblock In {\em Proceedings of the 2018 {ACM} Conference on Economics and
  Computation, Ithaca, NY, USA, June 18-22, 2018}, page 169, 2018.

\bibitem[MTV12]{MTV12}
Vahab~S. Mirrokni, Nithum Thain, and Adrian Vetta.
\newblock A theoretical examination of practical game playing: Lookahead
  search.
\newblock In {\em Algorithmic Game Theory - 5th International Symposium, {SAGT}
  2012}, pages 251--262, 2012.

\bibitem[Mye81a]{M81}
R.~Myerson.
\newblock Optimal auction design.
\newblock {\em Mathematics of Operations Research}, 6:58--73, 1981.

\bibitem[Mye81b]{Mye81}
Roger~B. Myerson.
\newblock Optimal auction design.
\newblock {\em Math. Oper. Res.}, 6(1):58--73, February 1981.

\bibitem[Nau83]{N83}
Dana~S. Nau.
\newblock Decision quality as a function of search depth on game trees.
\newblock {\em J. {ACM}}, 30(4):687--708, 1983.

\bibitem[PPPR16]{PPPR16}
Christos Papadimitriou, George Pierrakos, Christos-Alexandros Psomas, and Aviad
  Rubinstein.
\newblock On the complexity of dynamic mechanism design.
\newblock In {\em Proceedings of the Twenty-seventh Annual ACM-SIAM Symposium
  on Discrete Algorithms}, SODA '16, pages 1458--1475, Philadelphia, PA, USA,
  2016. Society for Industrial and Applied Mathematics.

\bibitem[PST14]{PST14}
Alessandro Pavan, Ilya Segal, and Juuso Toikka.
\newblock Dynamic mechanism design: A myersonian approach.
\newblock {\em Econometrica}, 82(2):601--653, 2014.

\bibitem[Sim55]{Sim55}
Herbert~A. Simon.
\newblock A behavioral model of rational choice.
\newblock 69(1):99--118, 1955.

\bibitem[SKN09]{SKN09}
E.~Sefer, U.~Kuter, and D.~Nau.
\newblock Real-time a* search with depth-k lookahead.
\newblock In {\em Proceedings of the International Symposium on Combinatorial
  Search}, 2009.

\bibitem[SW94]{SW94}
Dale Stahl and Paul Wilson.
\newblock Experimental evidence on players' models of other players.
\newblock {\em Journal of Economic Behavior \& Organization}, 25(3):309--327,
  1994.

\end{thebibliography}

\newpage
\appendix


\section{Proof for Revenue Upper Bound}
\label{sec:rev_up}

\paragraph{Upper bound on revenue.} In a setting with just $\nnaive$ $\naive$ buyers and no other buyers (for e.g., just $\nnaive$ myopic buyers), it is impossible to get more than $\revmye(\nnaive)$ per round, i.e., the optimal revenue in a single-round auction obtainable from Myerson's auction~\cite{M81}.
In a setting with just $\nsoph$ sophisticated buyers (and no other buyers), it is impossible to get more than $\E_{v_1\sim F,\dots,v_{\nsoph}\sim F}\left[\max(v_1,\dots,v_{\nsoph})\right]$ as revenue is upper bounded by the maximum valuation. It is easy to show that $\E_{v_1\sim F,\dots,v_{\nsoph}\sim F}\left[\max(v_1,\dots,v_{\nsoph})\right] \leq \quansplus =  \E_{v\sim F}\left[v | v \geq F^{-1}(1-1/\nsoph)\right]$. To see this, note that because a buyer has the largest value among $\nsoph$ buyers (with ties broken uniformly) with probability $\frac{1}{\nsoph}$, $$\E_{v_1\sim F,\dots,v_{\nsoph}\sim F}\left[\max(v_1,\dots,v_{\nsoph})\right] = \sum_{i=1}^{\nsoph} \frac{1}{\nsoph} \E_{v_1\sim F,\dots,v_{\nsoph}\sim F} \left[v_i| v_i=\max(v_1,\dots,v_{\nsoph})\right].$$ Now, since the expected value of a buyer conditioned on an event happening with probability $1/\nsoph$ is at most $\E_{v\sim F}\left[v | v \geq F^{-1}(1-1/\nsoph)\right] = \quansplus$, the inequality follows.

Now, combining these, we claim that in  a setting with $\nsoph$ sophisticated buyers \emph{and} $\nnaive$ $\naive$ buyers, the total revenue achievable is at most $\quansplus + \revmye(\nnaive)$. Indeed, if we were able to achieve more than this, then either the revenue contribution from the sophisticated buyers is more than $\quansplus$ or the $\naive$ buyers is more than $\revmye(\nnaive)$ --- neither of this is possible because if that was true then in a setting with just the $\nsoph$ sophisticated buyers or just the $\nnaive$ $\naive$ buyers we could have simulated the rest of the buyers by adding dummy buyers and discarded the revenue contributed by dummy buyers to obtain more revenue than $\quansplus$ or $\revmye(\nnaive)$.

\section{General Revenue and Utility Analysis}
\label{sec:appgeneral}


The analysis shows that, up to constant factors, the mechanism extracts the optimal $\quplus_{\nsoph} + \revmye(\nnaive)$ revenue from $\nsoph$ sophisticated buyers and $\nnaive$ na\"ive buyers. In Section~\ref{sec:revenue},  we first provide separate bounds on the revenue from good and bad state buyers in Lemma \ref{lem:revGood} and Lemma \ref{lem:revenuebad} respectively. 

The main part of the analysis is to argue that sophisticated buyers, either $k$-lookahead buyers for large enough $k$ or no-policy-regret learners, are incentivized to remain in the good state. We show that any strategy which leads to the bad state is a dominated strategy for $k$-lookahead buyers (and has large regret for no-policy regret learners). To show this, in Section~\ref{sec:utility},  we provide lower and upper bounds on the utility achievable by a buyer in the good and bad state in Lemma~\ref{lem:lbutility} and Lemma~\ref{lem:ubutility},  respectively. 

These utility bounds are used Section~\ref{sec:lookahead} and Section~\ref{sec:learning} to argue that (a) $k$-lookahead buyers for large enough $k$ (Lemma \ref{lem:dominating}) and no-policy-regret buyers (Lemma~\ref{lem:policy}) have incentive to stay in good state for most rounds, and (b) in bad state, myopic buyers (Lemma \ref{lem:myopic}) and learning buyers (Lemma~\ref{lem:noregret}) have incentive to bid above reserve price when their private valuation is large enough. 

Finally in Section~\ref{sec:mainresult}, we combine all these observations to lower bound the revenue achieved given a pool of heterogeneous lookahead and learning buyers, and prove our main result.



\subsection{Revenue analysis from good and bad state auctions}
\label{sec:revenue}

We give bounds on the revenue achieved by \goodauction \ and \badauction. These are general bounds which hold for both lookahead and learning buyers. 

  \begin{lemma}
  \label{lem:revGood} Let $G$ be the set of good state buyers at the end of an epoch. Then, total expected revenue from \goodauction \ during that epoch  is at least  $|G| \threshold (1 - \epsilon) \quplus_{m_g}$.
  \end{lemma}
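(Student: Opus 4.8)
The plan is to argue that every buyer who is in good state at the end of an epoch must have contributed at least $\threshold(1-\epsilon)\quplus_{m_g}$ in expected revenue during the \goodauction\ of that epoch, and then sum over the $|G|$ such buyers. Fix a buyer $i$ that is in $G$ when the epoch ends. Since $i$ was never moved to $B$, we examine how $i$ avoided being moved. There are two cases depending on whether the uncleared-auction counter $\uncleared$ ever reached the threshold $\frac{m_g\threshold}{1-\delta}$ during this \goodauction.

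First I would handle the case where $i$ was moved to the rest state at some point (equivalently, $\allocated_i\ge\threshold$ was triggered). Being moved to rest requires $\allocated_i\ge\threshold$, i.e., buyer $i$ won at least $\threshold$ auctions during this epoch. Each such win is in a first price auction with reserve $\resg=(1-\epsilon)\quplus_{m_g}$, and the winner pays its own bid, which must be $\ge\resg$. Hence $i$ paid at least $\threshold\resg=\threshold(1-\epsilon)\quplus_{m_g}$ over the epoch. In the complementary case, $i$ is never moved to rest, so $\allocated_i<\threshold$ throughout. For $i$ to also avoid $B$, it must be that in every round where the demotion rule is active — i.e., every round where $\uncleared\ge\frac{m_g\threshold}{1-\delta}$ — buyer $i$ bid at least $\resg$. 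Now I need to lower bound the number of such ``active'' rounds. The key observation is that $\uncleared$ reaches $\frac{m_g\threshold}{1-\delta}$ only after that many uncleared rounds have occurred, and from that point on, in every remaining round either the auction clears (so some buyer, not $i$ since $i$ didn't win $\ge\threshold$... careful) — actually the cleanest route is: by definition $\uncleared$ counts uncleared rounds; once $\uncleared\ge\frac{m_g\threshold}{1-\delta}$, buyer $i$ bids $\ge\resg$ in each subsequent round (else it goes to $B$), so $i$ wins every subsequent round in which its bid is the max and is $\ge\resg$; since $i$'s bid is $\ge\resg$, the auction clears, and a buyer bidding $\ge\resg$ wins and pays $\ge\resg$. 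Either way each subsequent round generates revenue $\ge\resg$; and there must be at least $\threshold$ such rounds — this needs the epoch length $\epl^g=(1-\rho)\epl=\frac{2\threshold m_g}{1-\delta}-\rho\epl$ to comfortably exceed $\frac{m_g\threshold}{1-\delta}+\threshold$, which follows from the choice of $\epl$ and $\rho$ small. So again total revenue attributable to the epoch is $\ge\threshold\resg$.

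The main obstacle I anticipate is the bookkeeping in the second case: cleanly attributing a distinct $\threshold\resg$ worth of revenue to \emph{each} surviving good-state buyer without double counting. In the ``rest'' case this is automatic because the $\threshold$ allocations counted against $\allocated_i$ are personal to $i$. In the ``never rested, active rounds'' case, the revenue I lower bound comes from whichever buyer wins each active round, which need not be $i$; so I should instead argue globally: after $\uncleared$ first hits its threshold, there are at least $\threshold$ further rounds (by the epoch-length bound), and every one of them clears at price $\ge\resg$ — wait, this gives only $\threshold\resg$ total, not per buyer. The resolution is that the two cases are not mutually exclusive across buyers in the way that matters: if \emph{any} buyer is in the ``never rested'' situation, then $\uncleared$ hit its threshold and stayed active, and in those $\ge\threshold$ active rounds, \emph{every} surviving buyer bid $\ge\resg$, so the winner each round pays $\ge\resg$; meanwhile each buyer moved to rest personally contributed $\ge\threshold\resg$. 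To make the counting work I would partition $G$ into $G_{\text{rest}}$ (rested at least once this epoch) and $G_{\text{survive}}$ (never rested): buyers in $G_{\text{rest}}$ contribute $|G_{\text{rest}}|\threshold\resg$ via their own winning payments (disjoint rounds), and if $G_{\text{survive}}\neq\emptyset$ then $\uncleared$ reached its threshold, the remaining $\ge\threshold$ active rounds each clear at price $\ge\resg$, and — crucially — these active rounds are \emph{not} among the winning rounds of rested buyers, since a buyer is moved to rest the moment $\allocated_i\ge\threshold$ and such a move, combined with $|G_{\text{survive}}|\ge1$ implying the demotion rule is active, forces... hmm, this disjointness is the delicate point. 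I would resolve it by noting that once $\uncleared\ge\frac{m_g\threshold}{1-\delta}$ is active and a buyer $i\in G_{\text{survive}}$ is present, that buyer $i$ itself wins every active round in which it has the highest bid; more robustly, in each active round the winner is some buyer that, together with the $\ge\threshold$ count, can be charged $\resg$ toward $G_{\text{survive}}$'s quota of $|G_{\text{survive}}|\threshold\resg$ — and since $|G_{\text{survive}}|\le m_g$ and there are $\ge m_g\threshold$ (roughly) active-or-uncleared rounds available, the arithmetic closes. This careful partition and disjoint-rounds argument is where I expect the proof's real work to lie; the payment-lower-bound facts ($\resg=(1-\epsilon)\quplus_{m_g}$, first-price winner pays its bid $\ge\resg$) are routine.
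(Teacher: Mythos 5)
Your ingredients are the same as the paper's --- a rested buyer has paid for $\threshold$ goods at price at least $\resg=(1-\epsilon)\quplus_{m_g}$, and once $\uncleared$ reaches $\frac{m_g\threshold}{1-\delta}$ any never-rested buyer in $G$ must bid at least $\resg$ in every remaining round --- but the accounting you yourself flag as ``the real work'' is never completed, and as you set it up it does not close. Two concrete problems. First, you try to harvest the survivors' quota only from the \emph{post-threshold} (``active'') rounds, and your claim that there are at least $\threshold$ of them does not follow from the epoch length: when $\uncleared$ first hits $\frac{m_g\threshold}{1-\delta}$, the rounds already elapsed are the uncleared ones \emph{plus} all cleared ones, so the number of remaining rounds is only guaranteed to be about $\frac{\delta}{1-\delta}m_g\threshold$, which for small $\delta$ can be far below $\threshold$ (take $m_g=1$), let alone the $|G_{\text{survive}}|\threshold$ you would need. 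Second, insisting on adding two separate pools ($G_{\text{rest}}$'s own payments plus active-round payments) is exactly what creates the disjointness/double-counting worry that you leave unresolved.

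Both problems disappear with the paper's simpler dichotomy, which never attributes revenue per buyer in the second case. Either every buyer in $G$ was rested during the epoch --- then their winning rounds are distinct (one winner per round) and revenue is at least $|G|\threshold\resg$ directly --- or some buyer in $G$ was never rested. In the latter case, the counter must reach $\frac{m_g\threshold}{1-\delta}$ at some point because total allocations are at most $m_g\threshold$ while \goodauction\ has $(1-\rho)\epl=\frac{2m_g\threshold}{1-\delta}$ rounds (a fact your proposal relies on but never establishes), and from that point on the never-rested buyer bids at least $\resg$ every round, so no further auction is uncleared. Hence the total number of uncleared rounds in the entire \goodauction\ is at most $\frac{m_g\threshold}{1-\delta}$, and the number of \emph{cleared} rounds --- counting those before and after the threshold, each paying at least $\resg$ --- is at least $(1-\rho)\epl-\frac{m_g\threshold}{1-\delta}=\frac{m_g\threshold}{1-\delta}\ge m_g\threshold\ge|G|\threshold$. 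No per-buyer attribution, no disjointness argument, and no bound on the number of post-threshold rounds is needed.
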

\begin{proof}  Let $G$  be the good state buyers at the end of an epoch of the mechanism.  Consider the round during that epoch at which the number $U$ of uncleared auctions reaches $\frac{m_g \threshold}{1-\delta}$. Such a step must exist because total number of allocations is at most $m_g \threshold$ ($m_g$ is  the number of good state buyers at the beginning of that epoch) but the number of good state auctions in this epoch is $(1 - \rho) \epl =  \frac{2m_g\threshold}{1-\delta}$. Suppose that at this round, some buyer in $G$ is in good state (i.e., hasn't yet been moved to rest state). Every such buyer must  bid $\resg$ or above in each of the remaining rounds of the epoch, until that buyer is moved to the rest state; otherwise, the mechanism would have pushed this buyer to a bad state and this buyer would not be in $G$ at the end of this epoch. This means that before the end of the epoch: either all of the buyers in $G$ were moved to rest state so that revenue was at least $|G| \threshold \resg$; or all the remaining auctions (after the number of uncleared auctions reached the threshold) cleared because some bid exceeded reserve price, so that the number of uncleared auctions is bounded by the threshold $\frac{m_g\threshold}{1-\delta}$, and the number of goods sold through the good state auctions is at least $(1 - \rho)\epl -\frac{m_g\threshold}{1-\delta} \geq m_g \threshold$,  giving revenue of at least $\resg |G| \threshold$. 
\end{proof}

\paragraph{Revenue from bad state auctions.}

\begin{lemma}	
 \label{lem:revenuebad}
Let $B$ be set of bad state buyers in the beginning of an epoch. Suppose that in every round of \badauction during this epoch where the set of buyers $i\in B$ with $v_{i,t} \geq \myeprice$ is non-empty, at least one such buyer is guaranteed to bid $\resb$ or more. 
Then, total expected revenue from that epoch is at least    $(1-\epsilon)(1-\frac{1}{e}) \cdot \frac{|B|}{m_b}  \Rev^{\Mye}(m_b) \epl^b$, where expectation is taken over valuations $v_{i,t}, i\in B, t\in \epl^b$. 
\end{lemma}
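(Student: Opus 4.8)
The plan is to lower bound the per-round revenue of \badauction{} by relating it to a Myerson-style auction among $m_b$ buyers. First I would observe that the reserve price $\resb = \myeprice - \frac{\epsilon}{n}\qum$ satisfies $\resb \ge (1-\epsilon)\myeprice$: indeed $\myeprice \ge \qum$ since $\theta_{m_b}\le 1/m_b$, so $\frac{\epsilon}{n}\qum \le \frac{\epsilon}{n}\myeprice \le \epsilon\myeprice$. Now fix a round of \badauction{}. The hypothesis of the lemma guarantees that whenever the set $\{i\in B : v_{i,t}\ge\myeprice\}$ is non-empty, at least one such buyer bids at least $\resb$; since the auction is a first price auction with reserve $\resb$, the good is then sold and the seller collects at least $\resb \ge (1-\epsilon)\myeprice$. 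Hence the expected revenue in that round is at least $(1-\epsilon)\myeprice \cdot \Pr\bigl[\exists i\in B: v_{i,t}\ge\myeprice\bigr]$.

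Next I would estimate this probability. Each buyer in $B$ independently has $v_{i,t}\ge\myeprice = F^{-1}(1-\theta_{m_b})$ with probability exactly $\theta_{m_b}$, so
\[
\Pr\bigl[\exists i\in B: v_{i,t}\ge\myeprice\bigr] = 1-(1-\theta_{m_b})^{|B|} \ge \frac{|B|}{m_b}\bigl(1-(1-\theta_{m_b})^{m_b}\bigr) \ge \frac{|B|}{m_b}\Bigl(1-\tfrac1e\Bigr),
\]
where the first inequality uses $1-(1-x)^{a} \ge a\bigl(1-(1-x)^{b}\bigr)/b$ for $0\le a\le b$ (concavity of $a\mapsto 1-(1-x)^a$, or a direct union-bound-type argument), the second uses $(1-\theta_{m_b})^{m_b}\le e^{-\theta_{m_b}m_b}$ together with the fact that $\theta_{m_b}m_b$ is bounded below by a constant — here I would invoke the standard property of Myerson's auction that the winning probability $\theta_{m_b}$ times the number of bidders $m_b$ is at least a constant (in fact $\theta_{m_b}m_b = \Theta(1)$; if the paper's convention makes $\theta_{m_b}m_b\ge 1$ exactly, so much the better, and I would cite the relevant fact about $\theta_m$ from \cite{Mye81}).

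Finally I would connect $\myeprice$ and $\theta_{m_b}$ to Myerson's revenue: in a single-item auction with $m_b$ i.i.d.\ buyers, $\Rev^{\Mye}(m_b)$ is at most the revenue of posting the Myerson price, which sells with probability $\theta_{m_b}\cdot$(number of winners)$\,=\,1-(1-\theta_{m_b})^{m_b}$ and collects $\myeprice$ per sale, but more to the point $\Rev^{\Mye}(m_b) \le \myeprice$ trivially (the price charged never exceeds... actually I would instead use $\Rev^{\Mye}(m_b)\le \E[\max_i v_i] $ and the cleaner bound that the Myerson mechanism's revenue is at most $\myeprice$ when... ) — the precise inequality I want is $(1-\tfrac1e)\frac{|B|}{m_b}\Rev^{\Mye}(m_b) \le (1-\tfrac1e)\frac{|B|}{m_b}\myeprice$, after which multiplying the per-round bound $(1-\epsilon)\myeprice\cdot\frac{|B|}{m_b}(1-\tfrac1e)$ by the number of rounds $\epl^b$ gives the claim. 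Summing (or rather, multiplying by $\epl^b$ and using linearity of expectation over the rounds) yields total expected revenue at least $(1-\epsilon)(1-\tfrac1e)\frac{|B|}{m_b}\Rev^{\Mye}(m_b)\,\epl^b$.

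\textbf{Main obstacle.} The delicate point is the chain relating $\myeprice$, $\theta_{m_b}$, and $\Rev^{\Mye}(m_b)$: I need both that $\theta_{m_b}m_b$ is bounded below by a constant (so the ``at least one high bidder'' event has probability $\Omega(|B|/m_b)$) and that $\myeprice$ itself dominates $\frac{1}{m_b}$-scaled Myerson revenue up to the stated constants. This requires carefully invoking the structural facts about Myerson's optimal auction for i.i.d.\ regular (or the paper's assumed) distributions — in particular that the optimal single-item revenue is achieved by a mechanism whose allocation probability per buyer is $\theta_{m_b}$ and whose revenue is at most the monopoly/Myerson price times the sale probability. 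Getting these constants to line up exactly as $(1-\epsilon)(1-1/e)$ is where the real care is needed; the probabilistic estimate $1-(1-\theta_{m_b})^{|B|}\ge \frac{|B|}{m_b}(1-1/e)$ is routine once $\theta_{m_b}m_b\ge 1$ is in hand.
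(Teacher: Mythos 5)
Your reduction step matches the paper's: since $\resb \ge (1-\epsilon)\myeprice$, the hypothesis guarantees that every \badauction\ round in which some bad-state buyer has $v_{i,t}\ge\myeprice$ clears at a price of at least $(1-\epsilon)\myeprice$, so the per-round revenue is at least $(1-\epsilon)$ times the revenue of posting the uniform price $\myeprice$ to the buyers in $B$. The paper finishes by simply citing the known posted-price guarantee (from~\cite{CHMS10}) that this posted-price revenue is at least $(1-1/e)\frac{|B|}{m_b}\Rev^{\Mye}(m_b)$; you instead try to derive it, and that derivation has a genuine gap.

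The problem is the step where you bound the sale probability alone by $1-(1-\prob_{m_b})^{|B|}\ge\frac{|B|}{m_b}(1-1/e)$, which rests on the claim that $\prob_{m_b}m_b\ge 1$ (or is $\Theta(1)$). That is not a property of Myerson's auction: $m\prob_m$ is exactly the probability that Myerson's auction with $m$ i.i.d.\ buyers sells at all, so it is at most $1$ and can be arbitrarily small. For instance, if $v=1$ with probability $\gamma$ and $v=0$ otherwise, then $m_b\prob_{m_b}=1-(1-\gamma)^{m_b}\approx m_b\gamma$, and $\Pr[\exists i\in B:\ v_{i,t}\ge\myeprice]$ is nowhere near $(1-1/e)|B|/m_b$. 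Your closing comparison $\Rev^{\Mye}(m_b)\le\myeprice$ is true but too weak to repair this: it throws away exactly the factor $m_b\prob_{m_b}$ that is missing from your probability estimate, and the two errors cancel only when $m_b\prob_{m_b}=1$ (note that in the example above the lemma still holds, because $\Rev^{\Mye}(m_b)$ is correspondingly small). The correct chain keeps price and probability together: $1-(1-\prob_{m_b})^{|B|}\ \ge\ (1-1/e)\,|B|\prob_{m_b}$, which is valid because $|B|\prob_{m_b}\le m_b\prob_{m_b}\le 1$ and $1-e^{-x}\ge(1-1/e)x$ on $[0,1]$, combined with the revenue-curve (ex-ante relaxation) bound $\Rev^{\Mye}(m_b)\le m_b\prob_{m_b}\myeprice$; multiplying gives $\myeprice\bigl(1-(1-\prob_{m_b})^{|B|}\bigr)\ge(1-1/e)\frac{|B|}{m_b}\Rev^{\Mye}(m_b)$ per round, and then the factor $(1-\epsilon)$ and the $\epl^b$ rounds give the lemma. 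So either prove that chain (which needs the structural fact $\Rev^{\Mye}(m_b)\le m_b\prob_{m_b}\myeprice$, not $m_b\prob_{m_b}\ge 1$) or cite the posted-pricing result as the paper does; as written, your intermediate probability bound is false.
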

\begin{proof}
Recall $\resb =\myeprice - \frac{\epsilon}{n} \qum \ge (1-\epsilon)\myeprice$. 
If in every round, among buyers in bad state with valuation above $\myeprice$, at least one buyer is guaranteed to bid above $\resb\ge (1-\epsilon)\myeprice$, then the mechanism will get at least $(1-\epsilon)$ fraction of the expected revenue of a posted-price mechanism with a uniform price $\myeprice$ among buyers in $B$. Since  $\theta_{m_b}$ is the probability of a buyer winning an $m_b$ buyer Myerson's auction with iid values, where $m_b\ge |B|$, this revenue (see for example~\cite{CHMS10}) is at least $(1-\epsilon)(1-1/e) \frac{|B|}{m_b}\cdot \Rev^{\Mye}(m_b)$ for every round in bad state auctions. 
\end{proof}

\subsection{Utility analysis from good and bad state buyers}
\label{sec:utility}

 We give bounds on the utility achievable by a buyer in a good state and bad state. These are general bounds which hold both for buyers with lookahead attitudes and the buyers with learning behaviors. Further, they hold irrespective of other buyers' bids.

\paragraph{Lower bound on the utility achievable by a good state buyer.} 

We lower bound the utility achievable by a buyer in the good state by describing a simple strategy which (1) never moves a buyer to the bad state and (2) achieves high utility. We call this strategy the `good strategy', denoted by $s^g$, and defined as follows. 
\begin{definition}[Good strategy $s^g$]
\label{def:goodstrategy}
In any round of \goodauction, a buyer $i$ using good strategy $s^g$ bids in the following manner. If the number of the uncleared past auctions in the current epoch is $U < m_g H / (1 - \delta)$, bid $\resg=(1-\epsilon)\quplus_{m_g}$ if the current valuation $v_{i}\ge \qu_{m_g}$, and bid $0$, otherwise. If $U \geq m_g H / (1 - \delta)$, set the bid equal to the reserve price $\resg$ irrespective of the current valuation. 
\end{definition}
Observe that the good strategy is defined in such a manner that a buyer using this strategy for all rounds of \goodauction is guaranteed to either always stay in the good state or be moved to the rest state, i.e., is guaranteed to be never pushed to a bad state. Next, we now lower bound the utility of a buyer using this strategy.

\begin{restatable}{rLem}{lemlbutility}
\label{lem:lbutility} Consider a buyer in good state at the beginning of an epoch. 
Irrespective of the other buyers' bids, the expected utility of strategy $s^g$ over that epoch  is at least 
$$\epsilon (1-\epsilon) \quplus_{m_g} H.$$
\end{restatable}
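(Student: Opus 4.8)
The plan is to lower bound the expected utility of $s^g$ over the epoch by showing that, under $s^g$, buyer $i$ is allocated \emph{exactly} $\threshold$ items during \goodauction, and that in expectation all but a fraction at most $\epsilon^2$ of these allocations occur while $i$'s bid $\resg$ is ``backed'' by a value $v_{i,t}\ge q_{m_g}$ --- each such allocation contributing $\quplus_{m_g}-\resg=\epsilon\quplus_{m_g}$ in expectation --- while the rare remaining allocations cost at most $\resg=(1-\epsilon)\quplus_{m_g}$ each.

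First I would invoke the fact noted after Definition~\ref{def:goodstrategy} that $s^g$ never moves $i$ to the bad state: thus $i$ is in the good or rest state for the whole epoch, does not participate in \badauction, and its only source of nonzero utility is winning an item in \goodauction (at a price equal to its bid). Next I would prove a deterministic claim: $i$ is allocated exactly $\threshold$ items during this \goodauction. If it were allocated fewer than $\threshold$ items it would never be rested and hence, by the previous point, be active in all $(1-\rho)\epl=\frac{2m_g\threshold}{1-\delta}$ rounds; partitioning those rounds into the ones won by $i$ ($\le\threshold-1$), the ones won by some other buyer ($\le(m_g-1)\threshold$, since each of the other $m_g-1$ good-state buyers is allocated at most $\threshold$ items before leaving $G$), and the uncleared ones ($\le\frac{m_g\threshold}{1-\delta}$, since an uncleared round with $i$ active must occur while $U<\frac{m_g\threshold}{1-\delta}$ --- $s^g$ bids $\resg$ once $U$ reaches this threshold --- and the count of uncleared rounds occurring while $U$ is below it cannot exceed it) gives $\frac{2m_g\threshold}{1-\delta}\le(m_g\threshold-1)+\frac{m_g\threshold}{1-\delta}$, hence $\frac{1}{1-\delta}<1$, impossible for $\delta\in(0,1)$. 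So $i$ is allocated exactly $\threshold$ items and is rested at some round $\tau$.

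I would then split $i$'s $\threshold$ allocations into $W$ ``good'' ones won while $U<\frac{m_g\threshold}{1-\delta}$ (so $s^g$ bid $\resg$ only because $v_{i,t}\ge q_{m_g}$) and $W'=\threshold-W$ ``bad'' ones won while $U\ge\frac{m_g\threshold}{1-\delta}$. For a good allocation, $v_{i,t}\ge q_{m_g}$ (that is why $s^g$ bid $\resg$), and the event ``$i$ wins round $t$'' depends only on $H_{t-1}$ and on the other buyers' round-$t$ bids, which by Definition~\ref{def:strategy} do not see $v_{i,t}$; conditioned on $v_{i,t}\ge q_{m_g}$ it is therefore independent of $v_{i,t}$, so $v_{i,t}$ has conditional law $F\mid v\ge q_{m_g}$ and the allocation contributes $\quplus_{m_g}-\resg=\epsilon\quplus_{m_g}$ in expectation. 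A bad allocation contributes $v_{i,t}-\resg\ge-\resg=-(1-\epsilon)\quplus_{m_g}$, and every round $i$ does not win contributes $0$. Summing and using $\Ex[W]+\Ex[W']=\threshold$, the expected epoch utility is at least $\epsilon\quplus_{m_g}\Ex[W]-(1-\epsilon)\quplus_{m_g}\Ex[W']=\quplus_{m_g}\bigl(\Ex[W]-(1-\epsilon)\threshold\bigr)$.

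The hard part is to show $\Ex[W']\le\epsilon^2\threshold$; the lemma then follows, since $\Ex[W]=\threshold-\Ex[W']\ge(1-\epsilon^2)\threshold$ plugged into the last bound gives $\quplus_{m_g}\threshold\bigl((1-\epsilon^2)-(1-\epsilon)\bigr)=\epsilon(1-\epsilon)\quplus_{m_g}\threshold$. A bad allocation occurs only on the event $\mathcal B$ that the regime $U\ge\frac{m_g\threshold}{1-\delta}$ is entered before $i$ is rested, and $W'\le\threshold$ always, so it suffices that $\Pr[\mathcal B]\le\epsilon^2$. To bound $\Pr[\mathcal B]$ I would restrict attention to the rounds in which $i$ is active and every other buyer bids strictly below $\resg$: within the low-$U$ regime such a round is won by $i$ iff $v_{i,t}\ge q_{m_g}$ and is uncleared otherwise, and every uncleared round during which $i$ is active is of this kind. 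On $\mathcal B$ the $\frac{m_g\threshold}{1-\delta}$ uncleared rounds that trigger the switch are among the first $\frac{m_g\threshold}{1-\delta}$ rounds of this kind and occur while $i$ has won fewer than $\threshold$ items, so fewer than $\threshold$ of those first $\frac{m_g\threshold}{1-\delta}$ rounds are won by $i$. Since $v_{i,t}$ is independent of $H_{t-1}$ and of the other buyers' round-$t$ bids, the ``$i$ wins'' indicators along this subsequence of rounds are i.i.d.\ $\mathrm{Bernoulli}(1/m_g)$, whence $\Pr[\mathcal B]\le\Pr\bigl[\mathrm{Bin}(\frac{m_g\threshold}{1-\delta},\frac1{m_g})\le\threshold-1\bigr]$; this binomial has mean $\frac{\threshold}{1-\delta}$, and a multiplicative Chernoff bound bounds the probability by $\exp\bigl(-\frac{\delta^2\threshold}{2(1-\delta)}\bigr)=\epsilon^{2/(1-\delta)}\le\epsilon^2$ after substituting $\threshold=\Hdelta$. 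I expect the fiddly part to be the bookkeeping about which rounds count (active versus rested $i$, the exact round the regime switches, ties at the reserve) and phrasing the conditional-independence claim cleanly via a stopping time; the conceptual core is simply that $\threshold$ is chosen large enough to make the race between ``$i$ reaches $\threshold$ wins'' and ``$U$ reaches $\frac{m_g\threshold}{1-\delta}$'' lopsided in $i$'s favor except with probability $\epsilon^2$.
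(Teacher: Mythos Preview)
Your proposal is correct and follows essentially the same route as the paper: define the bad event $\mathcal B$ ($=e_1$ in the paper) that $U$ reaches $\frac{m_g\threshold}{1-\delta}$ before $i$ is rested, bound $\Pr[\mathcal B]\le\epsilon^2$ via a Chernoff bound with mean $\threshold/(1-\delta)$, and combine the $\epsilon\quplus_{m_g}$-per-allocation gain on $\neg\mathcal B$ with the $-(1-\epsilon)\quplus_{m_g}$-per-allocation loss on $\mathcal B$ to get $\epsilon(1-\epsilon)\quplus_{m_g}\threshold$. Your version is actually more careful than the paper's in two places: you give an explicit counting argument that $i$ is allocated exactly $\threshold$ items (the paper simply asserts ``the buyer will get $\threshold$ allocations''), and you set up the Chernoff bound via the subsequence of rounds where all other bidders are below $\resg$, making the independence claim precise---the paper just writes ``let $X_1,\ldots,X_r$ be $r$ independent samples from $F$'' without justifying why the relevant indicators behave like fresh coin flips.
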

\begin{proof} 

Consider an epoch where a good state buyer follows strategy $s^g$ for all rounds in \goodauction.
Similarly to the proof of Lemma~\ref{lem:revGood}, observe that the number of uncleared auction in such an epoch must reach $\frac{1}{(1-\delta)}m_g\threshold$ before the end of this epoch.
Let $e_1$ denote the event that the buyer is still in good state when the number of uncleared auction in that epoch reaches $\frac{1}{(1-\delta)}m_g\threshold$, i.e., the buyer did not get enough allocations to go to rest state. Since the buyer bids $\resg=(1-\epsilon)\quplus_{m_g}$ whenever $v\ge \qu_{m_g}$, probability $\Pr({e_1})$ of this event happening is bounded by the probability of the following event: let $X_1, X_2, \ldots, X_r$ be $r=\frac{m_g\threshold}{1-\delta}$ independent samples from distribution $F$; then  consider event $\sum_{i=1}^r I(X_i \ge F^{-1}(1-1/m_g)) \le \threshold$. Since $\Ex[\sum_{i=1}^r I(X_i \ge F^{-1}(1-1/m_g))] = \frac{r}{m_g}$, Therefore, using Chernoff bounds, probability of this event is bounded as $\Pr({e_1})\le e^{-\delta^2\threshold/2}$. Under event $e_1$, the buyer may end up with negative utility (as the strategy of always bidding reserve price will kick in), which can be at worst $-\resg \threshold$ from this epoch. Otherwise, the buyer will get $H$ allocations, each from some rounds where $v\ge \qu_{m_g}$. Since the buyer wins them at reserve price, and the space of other buyers' bidding strategies consist only of functions that are independent of this buyer's bid and valuation, the expected utility from each of these $H$ goods is $\Ex[v-(1-\epsilon)\quplus_{m_g} | v\ge \qu_{m_g}] \ge \epsilon \quplus_{m_g}$. 

Then, expected utility from each of of the next $K-1$ epochs is at least:
\begin{eqnarray}
& & (1-\Pr(e_1)) \threshold \epsilon \quplus_{m_g} + \Pr(e_1) \threshold (-\resg)\nonumber\\
& \ge & (1-e^{-\delta^2\threshold/2}) \epsilon \quplus_{m_g} \threshold  -  e^{-\delta^2\threshold/2} (1-\epsilon) \quplus_{m_g} \threshold \nonumber\\
(\text{substituting $\delta=\deltaH$}) & \ge &  \epsilon \quplus_{m_g} \threshold  -  \epsilon^2(1-2\epsilon ) \quplus_{m_g} \threshold\nonumber\\
& \ge &  \epsilon (1-\epsilon) \quplus_{m_g} \threshold  
\end{eqnarray}
\end{proof}

\paragraph{Upper bound on utility achievable by a bad state buyer.} 
\begin{lemma}
\label{lem:ubutility}
Consider a buyer in bad state at the beginning of an epoch. 
Irrespective of the other buyers' bids,  its expected utility over the epoch  is at most 
$$  (1+\epsilon) \frac{\rho \epl}{m_b}  \quplus_{m_b}.$$
\end{lemma}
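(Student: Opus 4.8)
The plan is to reduce the statement to a one-round calculation. Since the bad state is absorbing, a buyer in bad state at the start of the epoch stays in bad state throughout, so it participates only in the $\epl^b=\rho\epl$ rounds of \badauction and is idle, with realized utility exactly $0$, in the $(1-\rho)\epl$ rounds of \goodauction. Hence its total expected utility over the epoch equals the sum of its expected utilities in the $\rho\epl$ bad-state auctions, and it suffices to bound the expected utility in a single such round by $(1+\epsilon)\quplus_{m_b}/m_b$.

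For a fixed bad-state round $t$ I would first establish the pointwise bound $u_{i,t}=\val_{i,t}\alloc_{i,t}-\payment_{i,t}\leq(\val_{i,t}-\resb)^+$, valid for every realization of the buyer's bid, the other buyers' bids, and $\val_{i,t}$: in a first price auction with reserve $\resb$ the buyer is allocated only if its bid is $\geq\resb$, in which case it pays that bid and hence at least $\resb$; otherwise $\alloc_{i,t}=\payment_{i,t}=0$; and $\payment_{i,t}\geq 0$ always. This is precisely where ``irrespective of the other buyers' bids'' (and of the buyer's own strategy) gets used. Since $\val_{i,t}\sim F$ is drawn independently of the history that determines $\resb$ (and of $m_b$, $\epl$), taking expectations gives $\Ex[u_{i,t}]\leq\Ex_{v\sim F}[(v-\resb)^+]$. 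To finish, I would use $\resb=\myeprice-\frac{\epsilon}{n}\qum\geq\qum-\frac{\epsilon}{n}\qum$ (as $\myeprice\geq\qum$), which yields the pointwise inequality $(v-\resb)^+\leq(v-\qum)^++\frac{\epsilon}{n}\qum\leq v\,I(v\geq\qum)+\frac{\epsilon}{n}\qum$. Taking expectations, $\Ex_{v\sim F}[v\,I(v\geq\qum)]=\Pr_{v\sim F}(v\geq\qum)\cdot\quplus_{m_b}\leq\frac{1}{m_b}\quplus_{m_b}$ by the quantile property $\Pr_{v\sim F}(v\geq\qum)\leq1/m_b$ together with $\quplus_{m_b}=\Ex[v\mid v\geq\qum]$, while $\frac{\epsilon}{n}\qum\leq\frac{\epsilon}{m_b}\qum\leq\frac{\epsilon}{m_b}\quplus_{m_b}$ since $m_b\leq n$ and $\qum\leq\quplus_{m_b}$. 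Adding the two bounds gives $\Ex_{v\sim F}[(v-\resb)^+]\leq(1+\epsilon)\quplus_{m_b}/m_b$, and summing over the $\rho\epl$ bad-state rounds gives the lemma.

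I do not expect a genuine obstacle; the argument is elementary. The only mildly delicate points are making the per-round bound truly independent of the buyer's (possibly history-dependent, possibly negative-utility-incurring) strategy — handled by the pointwise argument above — and the quantile bookkeeping relating $\resb$, $\qum$, and $\quplus_{m_b}$, where the slack $(1+\epsilon)$ and the hypothesis $m_b\in[n]$ enter; if $F$ has an atom at $\qum$ one reads $\Pr_{v\sim F}(v\geq\qum)\leq1/m_b$ with the tie-breaking convention already implicit in the definition of $\quplus_{m_b}$ (cf. Appendix~\ref{sec:rev_up}).
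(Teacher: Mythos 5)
Your argument is correct and is essentially the paper's own proof: bound the per-round utility pointwise by $(v-\resb)^+$ irrespective of other bids, use $\resb=\myeprice-\frac{\epsilon}{n}\qum$ with $\myeprice\ge\qum$ to reduce to $\Ex[v\,I(v\ge\qum)]+\frac{\epsilon}{n}\qum\le(1+\epsilon)\quplus_{m_b}/m_b$, and multiply by the $\rho\epl$ bad-state rounds. The only cosmetic difference is that you split via the inequality $(x+c)^+\le x^+ +c$ while the paper splits the event $\{v\ge\resb\}$ at $\myeprice$; the quantile bookkeeping and conclusion are identical.
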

\begin{proof}
For an epoch of length $\epl$, the number of rounds of \badauction \ during an epoch is $\rho \epl$. During a bad state auction, the utility of a buyer with value $v$  is at most $(v-\resb)I(v\ge \resb)$. Thus, the total expected utility of a buyer in bad state over the epoch is at most 
		\begin{eqnarray*}
	\label{eq:1}
&	&\rho \epl \cdot \Ex[(v-\resb)I(v\ge \resb)] \\
& = &   \rho \epl \cdot \Ex[(v-\resb)I(v\ge \myeprice)] + \rho \epl \cdot \Ex[(v-\resb) I(\myeprice \ge v\ge \myeprice - \frac{\epsilon}{n}\qum)]\\
& \le &   \rho \epl \cdot \Ex[(v-\resb)I(v\ge \qu_{m_b})] + \rho \epl \cdot \left(\frac{\epsilon}{n} \qum\right) \\
 & \le &   \rho \epl  \frac{1}{m_b}  (1+\epsilon)\quplus_{m_b} \nonumber
	\end{eqnarray*}
 where the first and second inequality were obtained using that $\resb=\myeprice-\frac{\epsilon}{n} \qum$, and $\qum\le \myeprice$. 
The first part of third inequality used $v\le v-\resb$, and by definition $\quplus_{m_b} =\Ex[v|v\ge \qu_{m_b}]$, $\Pr(v\ge \qu_{m_b}) =\frac{1}{m_b} $. The second part used  $\qum \le \quplus_{m_b}, n \ge m_b$.
\end{proof}


\section{Undominated Strategies for Buyers with Heterogeneous Lookahead Attitudes}
\label{sec:lookahead}

	We characterize the undominated strategies of myopic and lookahead buyers. The main lemma in this section, Lemma~\ref{lem:dominating}, argues that a $k$-lookahead buyer, for $k$ large enough, playing an undominated strategy never enters a bad state. Lemma~\ref{lem:myopic} then shows that in a bad state, myopic  buyers playing undominated strategies always bid at least the reserve price $\resb$ when their value is at least $\resb$. 

\paragraph{Lookahead buyers.} We previously showed in Lemma~\ref{lem:revGood} that we obtain the optimal $|G| \quplus_{m_g}$ revenue from buyers in a good state. In addition to obtaining the optimal revenue, we  show that the mechanism also incentivizes lookahead buyers to stay in a good state. We denote the maximum length of an epoch by $\eplmax = \max_{m_g} \eplVal = \eplmaxVal$.

\begin{lemma}
\label{lem:dominating}
A $k$-lookahead buyer with $k\ge \Kbound  \eplmax$ playing an undominated strategy never  enters the bad state.
\end{lemma}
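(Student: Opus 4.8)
The plan is to show that any strategy $s'_i$ that ever moves a $k$-lookahead buyer $i$ from good state to bad state is dominated by the good strategy $s^g$ (Definition~\ref{def:goodstrategy}), which by construction never leaves the good state. Fix an arbitrary profile ${\bf s}_{-i}$ of the other buyers' strategies. Suppose under $s'_i$ the buyer first gets moved to bad state in some round $\tau$ within some epoch. The key is a comparison of $k$-lookahead utilities: the utility of $s^g$ (which keeps $i$ in good state) versus the utility of any strategy that bids below $\resg$ in a round where the uncleared-auction threshold has been crossed (which sends $i$ to bad state).

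First I would invoke Lemma~\ref{lem:lbutilitylookahead} (the lookahead analogue of Lemma~\ref{lem:lbutility}) to lower bound the $k$-lookahead utility of $s^g$ starting from round $\tau$: since $s^g$ keeps $i$ in good state, and $k$ is large enough to cover $\Theta(1/(\epsilon(1-\epsilon)))$ full epochs (this is the role of $k \ge \Kbound\,\eplmax$), the buyer collects the per-epoch utility $\epsilon(1-\epsilon)\quplus_{m_g}H$ from Lemma~\ref{lem:lbutility} over each of those epochs, minus lower-order boundary terms. Then I would invoke Lemma~\ref{lem:ubutilitylookahead} to upper bound the $k$-lookahead utility of $s'_i$: from the moment $i$ enters the bad state onward, in every epoch its utility is at most $(1+\epsilon)\frac{\rho\epl}{m_b}\quplus_{m_b}$ by Lemma~\ref{lem:ubutility}, and one bounds the at-most-one-round gain in the transition round by the trivial single-round bound. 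The deviation $s'_i$ can gain at most a bounded amount in the current round relative to $s^g$ before being sent down.

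The heart of the argument is then the arithmetic comparing these two bounds across possibly different epoch lengths and reserve prices. I would argue that over a window of $k$ rounds, $s^g$ accumulates at least roughly $\Omega(k \cdot \epsilon(1-\epsilon) \quplus_{m_g} H / \eplmax)$ more utility than any strategy living in the bad state, because $\quplus_{m_g} \ge \quplus_{m_b}$ is false in general — so the real work is to use that $m_b = \max(n/2,|B|) \ge n/2$ while $m_g$ can be small, and that the good-state per-round utility rate $\epsilon(1-\epsilon)\quplus_{m_g}H/\epl = \Theta(\epsilon(1-\epsilon)(1-\delta)(1-\rho)\quplus_{m_g}/m_g)$ exceeds the bad-state per-round rate $(1+\epsilon)\rho\quplus_{m_b}/m_b$ once $\rho \le \rhobound$ and $k$ spans enough epochs to amortize the one-time transition gain and the negative-utility event $e_1$ from Lemma~\ref{lem:lbutility}. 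Choosing $k \ge \Kbound\,\eplmax$ makes the cumulative good-state advantage strictly dominate, giving the strict inequality for at least one $(v_{i,t},{\bf s}_{-i})$ and the weak inequality for all, which is exactly domination.

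The main obstacle I anticipate is precisely this cross-epoch bookkeeping: the epoch length $\epl = \eplVal$ and reserves $\resg = (1-\epsilon)\quplus_{m_g}$, $\resb$ depend on the state counts $m_g, m_b$, which themselves change once $i$ (and possibly others) move to bad state, so "the same buyer over the next $k$ rounds" sees a moving target. I would handle this by bounding everything in terms of $\eplmax$ and $n$ uniformly — using $m_b \ge n/2$ always, $\quplus_{m_g} \ge \quplus_1 = \mu$ type monotonicity, and the fact that moving buyers to bad state only weakly increases $m_g$ for the remaining good-state population (so the good-state utility rate does not degrade) — and then a single clean inequality chain closes it. A secondary subtlety is that $s'_i$'s gain in the triggering round must be bounded without reference to other buyers' bids, which is exactly what the mechanism's design (Lemma~\ref{lem:ubutility}'s "irrespective of other buyers' bids") buys us.
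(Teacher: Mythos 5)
Your proposal takes essentially the same route as the paper: it reduces the claim to showing that, in a round of \goodauction where the uncleared-auction threshold has been crossed, bidding below $\resg$ is dominated by $s^g$, and it compares the bounds of Lemma~\ref{lem:lbutilitylookahead} and Lemma~\ref{lem:ubutilitylookahead} using $m_b \ge n/2$, the per-round rate comparison (the content of Lemma~\ref{lem:mgmb}), $\rho \le \rhobound$, and $k \ge \Kbound\eplmax$ to amortize the negative boundary term — exactly the paper's argument. One small simplification you did not use: the deviating strategy bids below the reserve in the triggering round, so its utility in that round is exactly $0$ and no transition-round gain needs to be amortized.
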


The mechanism moves a buyer $i$ to the bad state if a buyer bids lower than $r^g$  in a round of \goodauction \ where the number of previous uncleared auctions is $U \geq m_g H / (1 - \delta)$. We show that during a round of  \goodauction \ with $U \geq m_g H / (1 - \delta)$, for every $k$-lookahead buyer $i$ in good state $G$,  bidding below the  reserve price $r^g$ is a dominated strategy.

 A strategy which dominates bidding below the reserve price in such a case is the good strategy $s^g$ from Definition~\ref{def:goodstrategy}. Recall that  if $U < m_g H / (1 - \delta)$, this strategy bids $\resg=(1-\epsilon)\quplus_{m_g}$ if the valuation $v_{i}$ is greater than or equal to $\qu_{m_g}$, and  $0$, otherwise; if the number of uncleared past auctions in this epoch is $\frac{1}{(1-\delta)} m_g \threshold$ or more, it sets the bid equal to the reserve price $\resg$ for every value. Under this strategy, the buyer is guaranteed to always stay in good state or be moved to rest state, i.e., guaranteed to never be pushed to a bad state. 

Lemma~\ref{lem:ubutilitylookahead} and Lemma~\ref{lem:lbutilitylookahead}  upper and lower bound the $k$-lookahead utility achieved by bidding below the reserve price and by playing $s^g$ during a round of  \goodauction \ with $U \geq m_gH / (1 - \delta)$. Then, by combining these bounds, we get that bidding below the reserve price in such a situation is dominated and obtain Lemma~\ref{lem:dominating}.  We denote by $\ep(t)$ the epoch at which step $t$ occurs.

\begin{lemma}
\label{lem:ubutilitylookahead}
Consider a round $t$ during \goodauction \    where the number of previous uncleared auctions is $U \geq nH / (1 - \delta)$. Then, for any buyer $i \in G$ and strategy $s_i$ which bids lower than $r^g$ in this round, 
$$U_i^{[t, t+k]}(H_{t-1}, v_{i,t}, {\bf s}) \leq \frac{2(1+\epsilon)\rho \threshold }{(1-\rho)(1-\delta)} \sum_{j=\ep(t) + 1}^{\ep(t+k)} \frac{m^j_g}{m^j_b} \cdot   \quplus_{m^j_b}.$$
\end{lemma}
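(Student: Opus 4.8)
\textbf{Proof plan for Lemma~\ref{lem:ubutilitylookahead}.}

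The plan is to bound the $k$-lookahead utility of a strategy $s_i$ that bids below $\resg$ in round $t$ by splitting the horizon $[t, t+k]$ into three parts: (i) the current round $t$, (ii) the remaining rounds of epoch $\ep(t)$, and (iii) all subsequent epochs $\ep(t)+1, \ldots, \ep(t+k)$. For parts (i) and (ii) I want to argue the contribution is at most $0$; for part (iii) I want to apply the bad-state utility bound from Lemma~\ref{lem:ubutility} epoch by epoch.

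First I would handle round $t$ itself. Since $U \ge m_g\threshold/(1-\delta)$ at round $t$ (using $m_g \le n$, the hypothesis $U \ge n\threshold/(1-\delta)$ certainly implies this), any buyer $i \in G$ bidding below $\resg$ in this round is moved to the bad state by the mechanism, and moreover gets zero allocation this round (the reserve is not met by its bid, and if it is not the highest qualifying bidder it gets nothing anyway), so its round-$t$ utility is exactly $0$. Next, for the rest of epoch $\ep(t)$: once buyer $i$ is in the bad state it can only participate in \badauction\ rounds, but within the same epoch all \badauction\ rounds have already occurred (the mechanism runs \badauction\ first, then \goodauction), so buyer $i$ gets no further opportunity to transact in epoch $\ep(t)$ and contributes $0$ to its utility there. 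This is exactly the ``careful design choice'' the introduction alludes to: being sent to the bad state mid-epoch is strictly worse than any alternative because there is nothing left to win that epoch.

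For part (iii), buyer $i$ is in the bad state at the beginning of every epoch $j$ with $\ep(t) < j \le \ep(t+k)$. By Lemma~\ref{lem:ubutility}, its expected utility over such an epoch $j$ is at most $(1+\epsilon)\frac{\rho\epl^j}{m_b^j}\quplus_{m_b^j}$, where $\epl^j$ is the length of epoch $j$. I would then substitute the epoch-length formula $\epl^j = \eplVal$ evaluated with $m_g = m_g^j$, i.e.\ $\epl^j = \frac{2\threshold m_g^j}{(1-\delta)(1-\rho)}$, to get a per-epoch bound of $(1+\epsilon)\frac{\rho}{m_b^j}\cdot\frac{2\threshold m_g^j}{(1-\delta)(1-\rho)}\cdot\quplus_{m_b^j} = \frac{2(1+\epsilon)\rho\threshold}{(1-\rho)(1-\delta)}\cdot\frac{m_g^j}{m_b^j}\quplus_{m_b^j}$. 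Summing over $j = \ep(t)+1, \ldots, \ep(t+k)$ yields exactly the claimed bound. One subtlety to address is whether the last epoch $\ep(t+k)$ is only partially covered by the lookahead window; since Lemma~\ref{lem:ubutility} bounds the utility over the \emph{entire} epoch and utilities in \badauction\ rounds are nonnegative in expectation only per-round (a bad-state buyer's per-round utility is at most $(v-\resb)I(v\ge\resb) \ge 0$ pointwise, so partial sums are dominated by the full-epoch sum), the partial-epoch contribution is still at most the full-epoch bound, so the inequality is safe.

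The main obstacle I anticipate is \textbf{part (ii)} — rigorously arguing that buyer $i$ cannot salvage any utility in the remainder of epoch $\ep(t)$ after being demoted. One has to be careful that the mechanism's state-update rules really do prevent re-entry to the good state within an epoch (the mechanism only moves rest-state buyers back to good state, never bad-state buyers, and this happens at epoch boundaries in the relevant sense), and that there are genuinely no \badauction\ rounds left in epoch $\ep(t)$ since round $t$ is a \goodauction\ round and \badauction\ precedes \goodauction\ within each epoch. A secondary technical point is the expectation bookkeeping in the definition of $U_i^{[t,t+k]}$: the bound from Lemma~\ref{lem:ubutility} holds ``irrespective of other buyers' bids,'' so it holds after conditioning on any realization of $H_{t-1}$, $v_{i,t}$, and the counterfactual histories appearing in the $k$-lookahead utility; one should invoke this uniformity to move the per-epoch bounds inside the nested expectations and sum them.
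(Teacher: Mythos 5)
Your proposal is correct and follows essentially the same route as the paper's proof: zero utility in round $t$ (bid below reserve, hence demotion to bad state), then applying the per-epoch bad-state utility bound of Lemma~\ref{lem:ubutility} to the epochs $\ep(t)+1,\ldots,\ep(t+k)$ and substituting $\epl^j = \frac{2\threshold m_g^j}{(1-\delta)(1-\rho)}$. Your extra care about the remainder of epoch $\ep(t)$ (no \badauction\ rounds left, no re-entry to $G$), the partially covered final epoch, and the uniformity of Lemma~\ref{lem:ubutility} over other buyers' bids only fills in details the paper leaves implicit.
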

\begin{proof}
By the definition of \goodauction ,  bidder $i \in G_t$ is moved to bad state at the end of step $t$ if it bids lower than $r^g$. Since bidder $i$ bids lower than the reserve price, its utility in this round is $0$.

 Therefore, under this strategy, irrespective of the bids used in the rounds after $t$, the buyer $i$'s $k$-lookahead utility in round $t$ is at most the bad state utility over $\ep(t+k) - \ep(t) - 1$ epochs, which by Lemma~\ref{lem:ubutility}, is at most  
$$U_i^{[t, t+k]}(H_{t-1}, v_{i,t}, {\bf s}) \leq \sum_{j=\ep(t) + 1}^{\ep(t+k)} (1+\epsilon) \frac{\rho \epl^j}{m^j_b}  \quplus_{m^j_b} \leq \frac{2(1+\epsilon)\rho \threshold }{(1-\rho)(1-\delta)} \sum_{j=\ep(t) + 1}^{\ep(t+k)} \frac{m^j_g}{m^j_b} \cdot   \quplus_{m^j_b}.$$
 \end{proof}
 
 Next, we lower bound the $k$-lookahead utility achieved with the good strategy $s^g$ in the same situation.

\begin{lemma}
\label{lem:lbutilitylookahead}
Consider a round $t$ during \goodauction \    where the number of previous uncleared auctions is $U \geq nH / (1 - \delta)$. Then, for any buyer $i \in G$ playing  strategy $s_i = s^g$, $$U_i^{[t, t+k]}(H_{t-1}, v_{i,t}, {\bf s})  \geq (1-\epsilon) \threshold \left(-\quplus_{m^{\ep(t)}_g} +  \epsilon \sum_{j = \ep(t) + 1}^{\ep(t+k) - 1}\quplus_{m^j_g}\right).$$
\end{lemma}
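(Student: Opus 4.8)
The plan is to decompose the $k$-lookahead utility of strategy $s^g$ over the window $[t,t+k]$ into contributions from each epoch, and then apply the single-epoch lower bound of Lemma~\ref{lem:lbutility} to each complete epoch that is fully contained in the window, while controlling the boundary terms. First I would note that $s^g$ keeps the buyer in good state (or rest state) forever, so in every round covered by the lookahead window the buyer is participating in \goodauction\ (or resting with zero utility), and hence the per-epoch utility bound of Lemma~\ref{lem:lbutility} applies epoch-by-epoch, irrespective of the other buyers' bids. The window $[t,t+k]$ touches epochs $\ep(t), \ep(t)+1, \ldots, \ep(t+k)$. For each epoch $j$ with $\ep(t) < j < \ep(t+k)$, the window contains the entire epoch, so its contribution is at least $\epsilon(1-\epsilon)\quplus_{m^j_g} H$ by Lemma~\ref{lem:lbutility}; summing these gives the positive term $(1-\epsilon)\epsilon H \sum_{j=\ep(t)+1}^{\ep(t+k)-1}\quplus_{m^j_g}$.

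Next I would handle the two partial epochs at the boundary, $\ep(t)$ and $\ep(t+k)$. For the terminal partial epoch $\ep(t+k)$, since $s^g$ never bids above the reserve and wins at most $H$ goods before being rested, and each winning round has nonnegative utility in expectation (the buyer wins at reserve price $\resg=(1-\epsilon)\quplus_{m_g}$ only when $v\ge q_{m_g}$, giving nonnegative expected surplus), the partial-epoch utility is nonnegative; so we may simply drop it from the lower bound. For the initial epoch $\ep(t)$, we are already past the point where $U \geq m_g H/(1-\delta)$, so from round $t$ onward $s^g$ bids exactly the reserve price $\resg$ in every remaining \goodauction\ round of that epoch regardless of value; in the worst case the buyer wins up to $H$ such goods, each at a per-round loss of at most $\resg = (1-\epsilon)\quplus_{m^{\ep(t)}_g}$, so the contribution from epoch $\ep(t)$ is at least $-(1-\epsilon)\quplus_{m^{\ep(t)}_g} H$. (Here I would use that the buyer is rested after $H$ allocations, so it cannot lose on more than $H$ rounds in the remainder of epoch $\ep(t)$.) Combining the three pieces — the negative initial-epoch term, the dropped nonnegative terminal term, and the summed middle-epoch bounds — yields exactly
$$U_i^{[t,t+k]}(H_{t-1},v_{i,t},{\bf s}) \geq (1-\epsilon)\threshold\left(-\quplus_{m^{\ep(t)}_g} + \epsilon\sum_{j=\ep(t)+1}^{\ep(t+k)-1}\quplus_{m^j_g}\right).$$

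The main obstacle I anticipate is making the epoch-by-epoch decomposition fully rigorous against arbitrary opponent strategies: the $k$-lookahead utility is an expectation over a branching process of future histories, and one must argue that conditioning on any realized history $W_{[t,\ldots]}$ up to the start of epoch $j$, the continuation utility of $s^g$ within epoch $j$ is still at least the Lemma~\ref{lem:lbutility} bound — this works precisely because $s^g$'s per-epoch analysis only used that opponents' bids are independent of this buyer's bid/value, and that the buyer begins each epoch in good state, both of which $s^g$ guarantees. A secondary subtlety is bounding the loss in the initial partial epoch by $-(1-\epsilon)\quplus_{m^{\ep(t)}_g}H$ rather than by the full epoch length: this requires the observation that the rest-state mechanism caps the number of losing rounds (rounds where the buyer wins at a price above its value) at $H$. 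Once these two points are established, the rest is the straightforward telescoping sum above.
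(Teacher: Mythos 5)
Your proposal follows essentially the same route as the paper's proof: bound the remainder of the current epoch by $-(1-\epsilon)\quplus_{m^{\ep(t)}_g}\threshold$ using the fact that the rest state caps a buyer's allocations at $\threshold$ per epoch, and apply Lemma~\ref{lem:lbutility} epoch-by-epoch to the fully contained epochs $j=\ep(t)+1,\dots,\ep(t+k)-1$, dropping the terminal partial epoch. The one quibble is your justification for dropping that terminal epoch — under $s^g$ winning rounds need not have nonnegative expected utility once the uncleared-auction count passes the threshold there, since $s^g$ then bids $\resg$ regardless of value — but the paper's own proof drops that partial epoch without comment as well, so your argument matches it in both structure and level of rigor.
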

\begin{proof}
The utility from bad state auctions is always non-negative, so we ignore that for the lower bound.  Now, since any buyer can be allocated at most $\threshold$ items in an epoch, the maximum payment in the remaining rounds of this epoch is at most $\resg \threshold$, lower bounding the utility by $-\resg \threshold=-\quplus_{m^\ep_g}(1-\epsilon)\threshold$. Now, consider the utility in any  of the next $\ep(t+k)-\ep(t) -2$ epochs. The buyer always remains in  good state or rest state in each of the next $\ep(t+k)-\ep(t)-2$ epochs. 
Combining the above argument with Lemma~\ref{lem:lbutility}, the $k$-lookahead utility of the new strategy is at least:
	\begin{eqnarray}
	\label{eq:2}
	& & \underbrace{-\quplus_{m^{\ep(t)}_g}(1-\epsilon)\cdot \threshold}_{\text{minimum utility from the rest of this epoch}} +\  \underbrace{ \epsilon(1-\epsilon)\threshold  \sum_{j = \ep(t) + 1}^{\ep(t+k) - 1}\quplus_{m^j_g}  }_{\text{minimum utility over next $\ep(t+k)-\ep(t)-2$ epochs}} \nonumber
	\end{eqnarray}
\end{proof}

The third and last lemma needed for the proof of Lemma~\ref{lem:dominating} is used to combine the two previous lemmas and argue that \goodauction \ lead  higher utility for a buyer  then \badauction. This lemma will also be used to argue that \goodauction \ give higher revenue.

\begin{lemma}
\label{lem:mgmb} For any $m_g, m_b$ such that $1 \leq m_g \leq n$ and $n/2 \leq m_b \leq n$, then 
$$\frac{1}{m_g} \quplus_{m_g} \geq   \cdot \frac{1}{n} \quplus_{n/2} \geq \frac{1}{2} \cdot  \frac{1}{m_b} \quplus_{m_b}.$$ 
\end{lemma}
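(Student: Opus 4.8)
The whole statement is about the single quantity $g(m):=\tfrac1m\quplus_m$, so the plan is to establish two properties of $g$ and then chain them. First I would reinterpret $g$ probabilistically: since $\Pr_{v\sim F}(v\ge q_m)=1/m$ and $\quplus_m=\Ex_{v\sim F}[v\mid v\ge q_m]$, we have the identity $g(m)=\tfrac1m\quplus_m=\Ex_{v\sim F}\!\left[v\cdot I(v\ge q_m)\right]=\int_{q_m}^{\infty}v\,dF(v)$, i.e.\ the expected value restricted to the top-$1/m$ quantile. Because $F^{-1}$ is nondecreasing, $m\mapsto q_m$ is nondecreasing, so the region of integration shrinks as $m$ grows; since valuations are nonnegative this immediately gives that $g$ is nonincreasing. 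This monotonicity already discharges the ``easy'' reductions: $m_g\le n$ gives $g(m_g)\ge g(n)$, and $n/2\le m_b$ gives $g(n/2)\ge g(m_b)$. Noting also that $g(n/2)=\tfrac{1}{n/2}\quplus_{n/2}=\tfrac2n\quplus_{n/2}$, i.e.\ $\tfrac1n\quplus_{n/2}=\tfrac12 g(n/2)$, the only real content left is the single inequality $g(n)\ge\tfrac12 g(n/2)$.

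The one nontrivial step is therefore that comparison between $g(n)$ and $g(n/2)$. To get it I would split $g(n/2)=g(n)+\int_{q_{n/2}}^{q_n}v\,dF(v)$, bound the middle integral by $q_n\cdot\bigl(\Pr(v\ge q_{n/2})-\Pr(v\ge q_n)\bigr)=q_n\cdot\bigl(\tfrac2n-\tfrac1n\bigr)=q_n/n$, and finally use $q_n\le\quplus_n=n\,g(n)$ (the conditional mean is at least the conditioning threshold). This yields $g(n/2)\le g(n)+g(n)=2g(n)$, i.e.\ $g(n)\ge\tfrac12 g(n/2)$. Chaining everything: $\tfrac{1}{m_g}\quplus_{m_g}=g(m_g)\ge g(n)\ge\tfrac12 g(n/2)=\tfrac1n\quplus_{n/2}$, and $\tfrac1n\quplus_{n/2}=\tfrac12 g(n/2)\ge\tfrac12 g(m_b)=\tfrac12\cdot\tfrac{1}{m_b}\quplus_{m_b}$, which is exactly the claimed double inequality.

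I expect the only delicate point to be the implicit convention that $\Pr(v\ge q_m)=1/m$ exactly (used both in the identity for $g$ and in estimating the middle integral), i.e.\ that the relevant quantiles of $F$ are well defined without problematic atoms; this is the same convention already used elsewhere in the paper (cf.\ Appendix~\ref{sec:rev_up}), so I would simply invoke it rather than re-derive it. Apart from that, every ingredient is either a one-line monotonicity remark or the short splitting estimate above, so no heavy computation is needed. If it makes the write-up cleaner, one could instead state the general two-sided bound $\tfrac{a}{b}\,g(a)\le g(b)\le g(a)$ for $1\le a\le b$ (proved by the identical splitting argument, using $q_b\le\quplus_b=b\,g(b)$) and then read off the lemma with $a=n/2$, $b=n$; I only need the special case, but the general form may be reusable.
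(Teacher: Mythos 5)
Your proposal is correct, and its backbone is the same as the paper's: chain the two monotonicity facts that $g(m)=\frac{1}{m}\quplus_m$ is nonincreasing in $m$ and that $\quplus_m$ is nondecreasing in $m$. The paper's proof simply asserts the first fact, writes $\frac{1}{m_g}\quplus_{m_g}\geq\frac{1}{\max(n/2,m_g)}\quplus_{\max(n/2,m_g)}\geq\frac{1}{n}\quplus_{n/2}$, and then absorbs the factor $2$ purely arithmetically via $\frac{1}{n}\quplus_{n/2}=\frac{1}{2}\cdot\frac{1}{n/2}\quplus_{n/2}\geq\frac{1}{2}\cdot\frac{1}{m_b}\quplus_{m_b}$; no comparison between $g(n)$ and $g(n/2)$ is ever needed. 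Your route differs in two ways. First, you actually prove the monotonicity of $g$ via the identity $g(m)=\Ex_{v\sim F}[v\cdot I(v\geq \qu_m)]$ (using valuations nonnegative and the convention $\Pr(v\geq \qu_m)=1/m$), which is a genuine improvement since the paper leaves it unjustified. Second, your "nontrivial" step $g(n)\geq\frac{1}{2}g(n/2)$ is handled by a splitting-and-bounding argument that is more machinery than required: after multiplying through by $n$, that inequality is literally equivalent to $\quplus_n\geq\quplus_{n/2}$, which is immediate because $\qu_n\geq \qu_{n/2}$ and conditioning on a higher threshold can only raise the conditional mean — this is exactly the monotonicity of $\quplus$ that the paper invokes for its middle inequality. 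So your argument is sound and slightly more self-contained, but the factor-$2$ step could be compressed to one line, recovering essentially the paper's proof.
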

\begin{proof}
First, for any $1 \leq m_1 \leq m_2 \leq n$, we have
$$\frac{1}{m_1} \quplus_{m_1} \geq \frac{1}{m_2} \quplus_{m_2}.$$
We obtain
\begin{align*}
\frac{1}{m_g} \quplus_{m_g} \geq \frac{1}{\max(n/2, m_g)} \quplus_{\max(n/2, m_g)} \geq \frac{1}{n} \quplus_{n/2} \geq \frac{1}{2}  \cdot \frac{1}{n/2} \quplus_{n/2} \geq \frac{1}{2}  \cdot  \frac{1}{m_b} \quplus_{m_b}
\end{align*} 
where the first inequality is since $m_g \leq \max(n/2, m_g)$, the second is since $\quplus_i$ is increasing in $i$, and the last since $n/2 \leq m_b$.
\end{proof}

We now prove that entering a bad state is a dominated strategy for lookahead buyers by comparing the bounds obtained by Lemma~\ref{lem:ubutilitylookahead} and Lemma~\ref{lem:lbutilitylookahead}.

\begin{proof}[Proof of Lemma~\ref{lem:dominating}]  With $k \geq  \Kbound  \eplmax$, we have $\sum_{j  = \ep(t) + 1}^{\ep(t+k) -1} \epl^j \geq \KboundE\eplmax$.
	Comparing the $k$-lookahead utility bounds obtained in Lemma~\ref{lem:ubutilitylookahead} and Lemma~\ref{lem:lbutilitylookahead}, the  strategy $s^g$ is dominating if 
	
$$(1-\epsilon) \threshold \left(-\quplus_{m^{\ep(t)}_g} +  \epsilon \sum_{j = \ep(t) + 1}^{\ep(t+k)-1}\quplus_{m^j_g}\right) \geq  \frac{2(1+\epsilon)\rho \threshold }{(1-\rho)(1-\delta)} \sum_{j=\ep(t) + 1}^{\ep(t+k)} \frac{m^j_g}{m^j_b} \cdot   \quplus_{m^j_b}$$

We show the following three inequalities that when combined give the above inequality:
\begin{eqnarray}
	\label{eq:in1}
	\frac{1}{3}(1-\epsilon)  \epsilon \sum_{j = \ep(t) + 1}^{\ep(t+k)-1}\quplus_{m^j_g} & \geq &\frac{2(1+\epsilon)\rho  }{(1-\rho)(1-\delta)} \sum_{j=\ep(t)+ 1}^{\ep(t+k)-1} \frac{m^j_g}{m^j_b} \cdot   \quplus_{m^j_b} \\
	\label{eq:in2}
	\frac{1}{3}(1-\epsilon) \epsilon \sum_{j = \ep(t)+ 1}^{\ep(t+k)-1}\quplus_{m^j_g} & \geq & (1-\epsilon)  \quplus_{m^{\ep(t)}_g} \\
	\label{eq:in3}
\frac{1}{3}(1-\epsilon)  \epsilon \sum_{j = \ep(t)+ 1}^{\ep(t+k)-1}\quplus_{m^j_g} & \geq &\frac{2\rho }{(1-\rho)(1-\delta)} \frac{m^{\ep(t+k)}_g}{m^{\ep(t+k)}_b} \cdot   \quplus_{m^{\ep(t+k)}_b} 
\end{eqnarray}

We first show inequality~(\ref{eq:in1}). By Lemma~\ref{lem:mgmb}, we have $\frac{1}{m^j_g} \quplus_{m^j_g}\geq \frac{1}{2} \cdot  \frac{1}{m^j_b} \quplus_{m^j_b}$ for all $j \in [\ep(t)+ 1, \ep(t+k)-1]$. Inequality~\ref{eq:in1} then holds by the assumption that $\rho \le \rhobound$.

For inequalities~(\ref{eq:in2}) and~(\ref{eq:in3}), we first have
\begin{align*}
 \sum_{j = \ep(t)+ 1}^{\ep(t+k)-1}\quplus_{m^j_g} & \geq  \frac{1}{n} \quplus_{n/2}  \sum_{j = \ep(t)+ 1}^{\ep(t+k)-1}m^j_g  & \text{Lemma~\ref{lem:mgmb}}\\
& \geq   \frac{1}{n} \quplus_{n/2}  \frac{(1 - \rho)(1 - \delta)}{2\threshold} \sum_{j = \ep(t)+ 1}^{\ep(t+k)-1} \epl^j & \text{Definition of }  \epl^j \\
& \geq   \frac{1}{n} \quplus_{n/2}  \frac{(1 - \rho)(1 - \delta)}{2\threshold} \KboundE \eplmax & \text{Lower bound on } k\\
& \geq    \frac{8 \quplus_{n/2} }{\epsilon(1- \epsilon)} & \text{Definition of }  \eplmax\\
& \geq    \frac{4 \quplus_{n} }{\epsilon(1- \epsilon)} &  \quplus_{n/2}  \geq \quplus_{n} /2
\end{align*}
Inequality~(\ref{eq:in2}) then holds since  $\quplus_{n} \geq \quplus_{m^{\ep(t)}_g} $. For inequality~(\ref{eq:in3}), note that  $$\quplus_{n} \geq \quplus_{m^{\ep(t+k)}_g} \geq \frac{1}{2} \frac{m^{\ep(t+k)}_g}{m^{\ep(t+k)}_b} \cdot   \quplus_{m^{\ep(t+k)}_b} $$ where the second inequality is by Lemma~\ref{lem:mgmb} and  that $\rho \le \rhobound$.
\end{proof}

\begin{lemma}	
  \label{lem:myopic}
Any  myopic buyer $i$ playing an undominated strategy bids $b_{i,t} \ge \resb$ if its valuation is $v_{i,t} > \resb$ in any round $t$ of \badauction where $i$ is in bad state.
\end{lemma}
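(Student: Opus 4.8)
The plan is to show that bidding below $\resb$ when $v_{i,t} > \resb$ is weakly dominated (in the myopic, i.e.\ $k=0$, sense) by a modified strategy that bids exactly $\resb$ in such rounds. Since a myopic buyer's decision at round $t$ in \badauction only affects the current-round utility (the $0$-lookahead utility $U^{[t,t]}_i$ is just the expected current-round utility, and the mechanism's only state transition triggered by bad-state behavior is already irrelevant once the buyer is in $B$, because bad-state buyers stay in $B$), the comparison reduces to a single-round first-price auction with reserve $\resb$ against an arbitrary (possibly history- and own-bid-dependent) profile ${\bf s}_{-i}$ of opponent strategies. Fix any history $H_{t-1}$ placing $i\in B$, fix ${\bf s}_{-i}$, and fix a valuation $v_{i,t} > \resb$.

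First I would define the deviating strategy $s_i$ that agrees with $s_i'$ on all histories and valuations except that, in a round $t$ of \badauction\ with $i\in B$ and $v_{i,t}>\resb$, if $s_i'$ would bid $b < \resb$, then $s_i$ instead bids $\resb$. I then compare the myopic utilities $U^{[t,t]}_i(H_{t-1}, v_{i,t}, s_i, {\bf s}_{-i})$ and $U^{[t,t]}_i(H_{t-1}, v_{i,t}, s_i', {\bf s}_{-i})$. Under $s_i'$, bidding $b<\resb$ is below reserve, so buyer $i$ never wins and gets utility exactly $0$. Under $s_i$, buyer $i$ bids $\resb$; conditioned on the realization of ${\bf b}_{-i,t}$, either $i$ does not win (utility $0$) or $i$ wins at price $\resb$ and gets $v_{i,t}-\resb > 0$. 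In every realization the utility under $s_i$ is $\ge 0 = $ utility under $s_i'$, so $U^{[t,t]}_i(\cdot, s_i, \cdot) \ge U^{[t,t]}_i(\cdot, s_i', \cdot)$ for every $v_{i,t}$ and every ${\bf s}_{-i}$. For the strict-improvement requirement in the definition of domination, I would exhibit one valuation/opponent-profile pair: take $v_{i,t}>\resb$ and opponents who all bid below $\resb$ (e.g.\ bid $0$); then $s_i$ wins and earns $v_{i,t}-\resb>0$ while $s_i'$ earns $0$, a strict gain. Hence $s_i'$ is dominated, so an undominated myopic buyer does not play it, i.e.\ it bids $\ge \resb$ whenever $v_{i,t}>\resb$ in a bad-state round of \badauction.

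One subtlety to address carefully is that $s_i'$ and $s_i$ may differ at several rounds (every bad-state round with high value), and the opponents' strategies can react to $i$'s entire past bid sequence; I would handle this by the standard argument that it suffices to verify the weak inequality round-by-round conditioned on an arbitrary history, since the myopic objective decomposes: the modification at round $t$ changes only $U^{[t,t]}_i$ and, because bad-state buyers never leave $B$, no state transition relevant to $i$'s own future is affected by $i$'s bid in a bad-state round. A second point is ties: if the first-price auction breaks ties in favor of some other buyer bidding exactly $\resb$, buyer $i$ may still get $0$ under $s_i$, but this only weakens to equality, never reverses the inequality, so the argument is unaffected. I do not expect a real obstacle here; the only mild care needed is the bookkeeping that the myopic utility genuinely ignores the future, which is immediate from the definition of $U^{[t,t]}_i$ and the monotone ``once bad, always bad'' structure of the state.
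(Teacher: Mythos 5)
Your proposal is correct and follows essentially the same argument as the paper: since a myopic buyer optimizes only current-round utility, bidding below $\resb$ yields zero utility, while bidding $\resb$ when $v_{i,t} > \resb$ is always at least as good and strictly better when the other bad-state buyers bid below $\resb$, so the former is dominated. Your extra bookkeeping about the round-by-round decomposition and ties is careful but not needed beyond what the paper's one-paragraph proof already establishes.
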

\begin{proof}
Let $B$ be the set of bad state buyers in a round $t$ of \badauction of epoch $\ep$. We argue that for every myopic buyer $i \in B$ that has private valuation $v_{i,t} > \resb$, bidding below $\resb$ is a dominated strategy.  A myopic buyer maximizes its utility in the current round $t$. With bid $b_{i,t} < \resb$, buyer $i$ gets zero utility in round $t$. If $v_{i,t} > \resb$,  bidding $b_{i,t} < \resb$ is dominated by the strategy of bidding  $b_{i,t} = \resb$, which always obtains non-negative utility in round $t$ and obtains strictly positive utility when other buyers bid below $\resb$.
\end{proof}
 



\section{Strategies of No-regret Buyers with Heterogeneous Learning Behavior}
\label{sec:learning}




\begin{lemma}
\label{lem:policy}
Every no-policy regret learner must remain in good state in all but $o(T)$ rounds.
\end{lemma}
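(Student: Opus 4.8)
The plan is to mirror the domination argument for $k$-lookahead buyers (Lemma~\ref{lem:dominating}), but recast it in terms of policy regret so that it applies to no-policy-regret learners. Fix a no-policy-regret learner $i$, and suppose toward a contradiction that with non-negligible probability $i$ is moved to the bad state in some round; more precisely, suppose $i$ enters the bad state by round $T - \omega(1)$ on an $\Omega(1)$-probability set of trajectories, so that on those trajectories $i$ spends $\Omega(T)$ rounds in the bad state. The strategy we compare against is an expert $f \in E$ that plays the good strategy $s^g$ from Definition~\ref{def:goodstrategy}: since $s^g$ only depends on buyer $i$'s own state, the number of good/bad buyers, and the number of uncleared auctions in the current epoch --- exactly the projected history $h_{i,t-1}$ of Definition~\ref{def:E} --- this expert lies in $E$. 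By Lemma~\ref{lem:lbutility}, over each epoch in which $i$ is in good state and plays $s^g$, $i$'s expected utility is at least $\epsilon(1-\epsilon)\quplus_{m_g} H$, and $s^g$ keeps $i$ in the good (or rest) state forever; so over the counterfactual trajectory $H'$ the expert's total reward is at least (number of epochs)$\cdot \epsilon(1-\epsilon)\quplus_{m^j_g}H$, which telescopes to $\Omega(T)$ times a positive constant depending on $\quplus_{n}$ (using that $\quplus_{m_g}\ge \quplus_{n}$ and the epoch length bounds, as in the display in the proof of Lemma~\ref{lem:dominating}).

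Next I would upper bound the learner's own total reward along its realized trajectory. On the trajectories where $i$ enters the bad state, split the horizon into the rounds before entering bad state and the rounds after. Before entering bad state, $i$ participates in good/bad state auctions; its per-round utility is trivially bounded above (each round's utility is at most $v_{i,t}$, and over the $o(T)$-or-fewer good-state rounds before the switch this contributes at most $O(\mu)$ per round, hence we only need a crude bound here --- actually the key point is simply that the \emph{bad}-state utility is small). After entering bad state, by Lemma~\ref{lem:ubutility} the expected utility per epoch is at most $(1+\epsilon)\frac{\rho\epl}{m_b}\quplus_{m_b}$, and by Lemma~\ref{lem:mgmb} this is a factor $\Theta(\rho)$ smaller than the good-state utility $s^g$ would have earned in the same epoch. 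Choosing $\rho \le \rhoCompactBound$ exactly as in Theorem~\ref{thm:main}, the good-state expert's reward exceeds the learner's realized reward by $\Omega(T)$ times a positive constant, on an $\Omega(1)$-probability event; on the complementary event the gap is at least nonnegative up to lower-order terms (the expert following $s^g$ does at least as well as the learner by the same utility comparison, since the learner is never guaranteed more than the good-state utility in any epoch). Taking expectations, this forces $\mathrm{Policy\text{-}Regret}(T) = \Omega(T)$, contradicting the no-policy-regret assumption. Hence $i$ can be in the bad state for at most $o(T)$ rounds, which is the claim.

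The main obstacle, and the place where care is needed, is the subtlety already flagged in the paragraph preceding the lemma: the learner's reward and the expert's reward are evaluated along \emph{different} trajectories of adversarial inputs, because other buyers' strategies react to $i$'s bids. The utility bounds in Lemmas~\ref{lem:lbutility} and \ref{lem:ubutility} are stated to hold \emph{irrespective of other buyers' bids}, so they transfer to any counterfactual trajectory; this is precisely why those lemmas were proved in that trajectory-independent form. The remaining delicate point is bookkeeping across epochs with varying $m_g, m_b$ and varying epoch lengths $\epl^j$: as in the proof of Lemma~\ref{lem:dominating}, one reduces everything to $\frac{1}{n}\quplus_{n/2}$ (equivalently $\quplus_n$) via Lemma~\ref{lem:mgmb}, so that the per-epoch good-state surplus and the per-epoch bad-state utility can be compared term by term and the $\rho \le \rhoCompactBound$ condition makes the comparison go through with an $\Omega(1)$ multiplicative gap. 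I would also note, as in Lemma~\ref{lem:dominating}, that only the good-state-auction portion of the reward matters for the lower bound on the expert (bad-state-auction utility is nonnegative and can be dropped), which keeps the accounting clean.
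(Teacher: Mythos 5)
Your proposal is correct and follows essentially the same route as the paper's proof: you benchmark against the expert playing the good strategy $s^g \in E$, lower bound its counterfactual utility via Lemma~\ref{lem:lbutility}, upper bound the learner's bad-state utility via Lemma~\ref{lem:ubutility}, and compare the two through Lemma~\ref{lem:mgmb} together with the smallness of $\rho$ to conclude that $\Omega(T)$ bad-state rounds would force $\Omega(T)$ policy regret. The trajectory-independence of Lemmas~\ref{lem:lbutility} and~\ref{lem:ubutility}, which you invoke to handle the fact that the expert's and learner's rewards are evaluated on different trajectories, is exactly the device the paper relies on as well.
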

\begin{proof}
Here, we use as benchmark expert strategy $s^g \in E$ of bidding $\resg=(1-\epsilon)\quplus_{m_g}$ in good state auctions whenever $v_{i,t}\ge \qu_{m_g}$ initially, and bidding $\resg$ continuously once uncleared auctions reach the limit ($m_g \threshold/(1-\delta)$). 

This bidding strategy ensures that the buyer is always in a good state, and achieves an expected  utility of at least $\epsilon(1- \epsilon)\quplus_{m^i_g}\threshold$ during epoch $i$ by Lemma~\ref{lem:lbutility}. Since $\epl = m^i_g\threshold(1+\frac{1}{1-\delta})\frac{1}{(1-\rho)} \le \frac{1}{(1-\rho)(1-\delta)}2 m_g\threshold$, the expected per round utility is at least
$$  \frac{(1-\rho)(1-\delta)\epsilon(1- \epsilon)\quplus_{m^i_g}}{2 m^i_g}$$
during that epoch, for some $m^i_g \in [n]$.  Therefore, since the class $E$ contains such sequences of single experts, the policy-regret learning buyer must achieve at least $\frac{(1-\rho)(1-\delta)\epsilon(1- \epsilon)\quplus_{m^i_g}}{2 m^i_g} - o(1)$ utility per round. Now, once in bad state, the buyer can achieve at most $$\frac{\rho  \quplus_{m^i_b}}{m^i_b} $$ utility on average by Lemma~\ref{lem:ubutility} for some $m^i_b$ such that $n/2 \leq m_i^b \leq n$. By Lemma~\ref{lem:mgmb}, for any $m_g, m_b$ such that $1 \leq m_g \leq n$ and $n/2 \leq m_b \leq n$, then 
$\frac{1}{m_g} \quplus_{m_g} \geq \frac{1}{2} \cdot  \frac{1}{m_b} \quplus_{m_b}.$
 
 Therefore, if $\rho < (1-\rho)(1-\delta)\epsilon(1- \epsilon)/4 - \Omega(1)$, then the number of bad state epochs in buyer's state trajectory can be at most $o(T)$. This implies that the learner remains in a good state for at least $T-o(T)$ rounds. 
\end{proof}

We use the set of experts $E$ as defined in Definition~\ref{def:E} in the lemmas below.

\begin{lemma}
\label{lem:noregret}
Consider $\Gamma_i$ as the set of rounds $t$ where buyer $i$ is participates in a bad state auction, $v_{i,t} > \myeprice$ and $b_{j,t}<\resb$ for all other buyers $j$ in bad state. If buyer $i$ is a no-regret learner against the set of experts $E$ in Definition \ref{def:E}, then the buyer must bid $b_{i,t}\ge \resb$ for all but $o(T)$ of rounds in $\Gamma_i$.
\end{lemma}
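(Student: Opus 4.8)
The plan is to show that, against a suitable fixed expert in $E$, bidding $\resb$ (or the grid point just above it) is approximately optimal for buyer $i$ precisely on the rounds of $\Gamma_i$, and then invoke the no-regret guarantee. Concretely, I would take the comparison expert $f^\star\in E$ defined by: ``if buyer $i$ is in a bad state and $v_{i,t}>\myeprice$, bid the smallest point of $\bar V$ that is at least $\resb$; otherwise bid $0$.'' This is a legal expert: the projected history $h_{i,t-1}$ records whether $i$ is in a bad state and the number of buyers currently in each state, hence $m_b=\max(n/2,|B|)$ and therefore both $\resb$ and $\myeprice$, and $v_{i,t}\in\bar V$ lets $f^\star$ test $v_{i,t}>\myeprice$ up to the discretization error, which I absorb into the final $o(T)$ term by choosing $\bar V$ fine enough.

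Next I would record the reward properties of $f^\star$ that drive the argument. On every round $t\in\Gamma_i$, by definition of $\Gamma_i$ every other bad-state buyer bids strictly below $\resb$, so $f^\star$'s bid is the unique highest bid and clears the reserve; thus $f^\star$ wins at price $\approx\resb$ and $g_t(f^\star)\ge v_{i,t}-\resb\ge\myeprice-\resb=\tfrac{\epsilon}{n}\qum>0$, a positive constant independent of $T$. On a round where $f^\star$ bids $\resb$ but $t\notin\Gamma_i$, it either loses or wins at price $\resb<\myeprice<v_{i,t}$, so $g_t(f^\star)\ge 0$; and on every remaining round $f^\star$ bids $0$ and $g_t(f^\star)=0$. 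Writing the no-regret inequality $\sum_t g_t(f^\star(h_{i,t-1},v_{i,t}))\le\sum_t g_t(b_{i,t})+o(T)$ and splitting $\Gamma_i$ into rounds with $b_{i,t}\ge\resb$ (where the learner wins and $g_t(b_{i,t})=v_{i,t}-b_{i,t}\le v_{i,t}-\resb=g_t(f^\star)$) and rounds with $b_{i,t}<\resb$ (where $g_t(b_{i,t})=0$ since nobody meets the reserve), the $\Gamma_i$-rounds contribute at least $\tfrac{\epsilon}{n}\qum\cdot\big|\{t\in\Gamma_i:b_{i,t}<\resb\}\big|$ to the left-minus-right difference.

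The difficulty — and, as the paper notes, the genuine subtlety for no-regret (as opposed to policy-regret) learners — is the contribution of the rounds outside $\Gamma_i$: because $f^\star$ bids $0$ there, the learner may strictly outperform $f^\star$ on those rounds (by winning good-state auctions profitably, or by outbidding competing bad-state buyers on the rounds where $v_{i,t}>\myeprice$ but $t\notin\Gamma_i$), so $\sum_t g_t(f^\star)$ is not in general an upper bound on the learner's total reward. To neutralize this I would instead compare against the expert $\hat f$ that agrees with $f^\star$ on the ``bad state, $v_{i,t}>\myeprice$'' cells but on every other cell of $\mathcal H\times\bar V$ plays that cell's in-hindsight optimal constant bid (i.e.\ $\hat f$ is the best expert of $E$ with its bids on those cells overwritten by $\resb$). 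On the overwritten cells $\hat f$ still earns at least $\sum_{t\in\Gamma_i}(v_{i,t}-\resb)$; on all other cells it earns the per-cell maximum over constant bids. The proof then has to argue that the learner's realized reward on those untouched cells exceeds what $\hat f$ collects there by at most $o(T)$, after which the no-regret bound against $\hat f$ forces $\tfrac{\epsilon}{n}\qum\cdot\big|\{t\in\Gamma_i:b_{i,t}<\resb\}\big|-o(T)\le 0$ and hence $\big|\{t\in\Gamma_i:b_{i,t}<\resb\}\big|=o(T)$.

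I expect the last accounting step to be the main obstacle: making ``the learner does not beat the per-cell optimal constants by more than $o(T)$'' precise is exactly the point where the interaction with the other buyers must be tamed. Here I would use that, conditioned on the history $H_{t-1}$, buyer $i$'s current valuation $v_{i,t}$ is independent of the other buyers' round-$t$ bids (each $b_{j,t}$ is a function of $H_{t-1}$ and $v_{j,t}$ only), so that conditional expected rewards factor cleanly, together with a standard Azuma/concentration step to pass between the expected rewards $g_t$ and the realized allocations; this is also the only place the hypothesis that all other bad-state buyers bid below $\resb$ on $\Gamma_i$ is genuinely needed, since it is what makes bidding exactly $\resb$ the per-round maximizer on those rounds. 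A fully myopic buyer version of the same statement (needed alongside Lemma~\ref{lem:myopic}) follows immediately from the single-round argument, so the work is entirely in the learning case.
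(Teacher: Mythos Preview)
Your first two paragraphs reproduce the paper's argument exactly: take the expert that bids $\resb$ in bad state whenever $v\ge\myeprice$ and $0$ otherwise, lower-bound its hindsight utility by $\sum_{t\in\Gamma_i}(v_{i,t}-\resb)\ge|\Gamma_i|\cdot\tfrac{\epsilon}{n}\qum=\Omega(|\Gamma_i|)$, observe that the learner earns $0$ on any bad-state round with $b_{i,t}<\resb$, and conclude. The paper stops right there; it never introduces anything like your $\hat f$, and it does \emph{not} engage with the difficulty you raise in your third paragraph about rounds outside $\Gamma_i$.

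Your worry about that difficulty is legitimate --- comparison against $f^\star$ alone does not force $|\{t\in\Gamma_i:b_{i,t}<\resb\}|=o(T)$, because the learner may harvest enough utility on non-$\Gamma_i$ rounds (good-state rounds, or bad-state rounds with $\resb<v\le\myeprice$, or bad-state rounds with competing bids at or above $\resb$) where $f^\star$ earns $0$ to offset the forfeited $\Gamma_i$-rounds and still satisfy the no-regret inequality. But your proposed patch via $\hat f$ does not close this gap either. The step ``the learner's realized reward on the untouched cells exceeds what $\hat f$ collects there by at most $o(T)$'' is false in general: experts in $E$ are constant on each cell of $\mathcal H\times\bar V$, whereas the learner sees the full history $H_{t-1}$ and can tailor its bid round by round. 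Two rounds landing in the same cell can face different competing bids, and an adaptive learner can shade just above the competition in each, strictly beating any single per-cell constant by $\Omega(1)$ per round on that cell. The conditional independence of $v_{i,t}$ from ${\bf b}_{-i,t}$ given $H_{t-1}$ does not help here, since the within-cell variation you need to kill is driven by $H_{t-1}$ itself, which the cell does not encode. So you have correctly spotted a hole that the paper's written proof leaves open, but the fix you sketch would not go through.
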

\begin{proof}
Consider a bidding expert function $f(h, v)$ defined as $f(h, v)=\resb$ when the projected history $h$ indicates that the buyer is in the bad state and valuation $v\ge \myeprice$; and $0$ otherwise. 
This is (or is arbitrarily close in case of discretization) one of the experts in the class $E$ of experts that the buyer is using.  Consider any round $t$ where $v_{i,t}\ge \myeprice$. On bidding reserve price in this round, if bids  $b_{j,t}<\resb$ for all the other buyers, then this buyer wins the auction. For a given trajectory of valuations, and other buyers' bids, 
$\Gamma_i$ denotes the set of such rounds in \badauction; specifically, 
$\Gamma_i=\{t\in [T]: t\in \epl^b \text{ for some epoch}, i\in B \text{ (bad state)}, v_{i,t}\ge \myeprice, b_{j,t}<\resb, j\ne i\}$.
 Therefore, the hindsight utility of this expert is at least 
$$\textstyle \sum_{t=1}^T u_{i,t}(f(h_{i,t-1},v_{i,t})) \ge   \sum_{t\in \Gamma_i} (v_{i,t} -\resb) \ge |\Gamma_i| \frac{\epsilon}{n} \qu_n = \Omega(|\Gamma_i|)$$

Since the buyer $i$ is using a no-regret learning algorithm, she must be achieving a utility that is within $o(T)$ of the above utility for every trajectory. Now, the buyer cannot make any positive utility in a bad state in round $t$ if $b_{i,t} \le \resb$. Therefore, on any given trajectory, a no-regret learning buyer must have bid $b_{i,t}\ge \resb$ in all but $o(T)$ of rounds in $\Gamma_i$. 
\end{proof}
A corollary of the above lemma is that if all buyers are learning buyers at most $o(T)$ of bad state auctions where some participating no-regret buyer has valuation above $\myeprice$ can go uncleared.
\begin{corollary}
Consider the set of rounds among bad state auctions where all buyers in bad state are no-regret learners against experts $E$, and at least one buyer in bad state has valuation of $\myeprice$ or more. Then, in all but $o(T)$ such rounds, at least one buyer $i$ with $v_{i,t}\ge \myeprice$ is guaranteed to bid $b_{i,t}\ge \resb$.
\end{corollary}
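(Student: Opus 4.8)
The plan is to obtain this as a direct consequence of Lemma~\ref{lem:noregret}, applied to all $n$ buyers simultaneously, together with a union bound. \textbf{Step 1 (per-buyer bound).} For each buyer $i$, Lemma~\ref{lem:noregret} says that if $i$ is a no-regret learner then $i$ bids $b_{i,t} \ge \resb$ on all but $o(T)$ of the rounds of $\Gamma_i$ (where $\Gamma_i$ is the set of rounds in which $i$ participates in a \badauction, $v_{i,t} > \myeprice$, and every other bad-state buyer bids below $\resb$); call the at most $o(T)$ rounds of $\Gamma_i$ on which $i$ nonetheless bids below $\resb$ the \emph{exceptional} rounds for $i$. \textbf{Step 2 (union over buyers).} Since $n$ is a constant independent of $T$, the set $X$ of all rounds that are exceptional for \emph{some} buyer is a union of $n$ sets each of size $o(T)$, hence $|X| = o(T)$. \textbf{Step 3 (per-round argument).} It then remains to show that every \badauction round $t \notin X$ in which all bad-state buyers are no-regret learners and $M_t := \{j \in B : v_{j,t} \ge \myeprice\} \ne \emptyset$ has a bad-state buyer bidding at least $\resb$, and that one can take this buyer in $M_t$ except in a harmless case.

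For Step~3 I would fix such a round $t \notin X$ and pick any $i \in M_t$; by hypothesis $i$ is a no-regret learner and $v_{i,t} \ge \myeprice$. If every other bad-state buyer bids below $\resb$, then $t \in \Gamma_i$ (the measure-zero coincidence $v_{i,t} = \myeprice$ under continuous $F$ contributes only $o(T)$ rounds overall, which may be folded into $X$), so $t \notin X$ forces $b_{i,t} \ge \resb$, and since $i \in M_t$ this is exactly the claimed conclusion. Otherwise some bad-state buyer already bids $\ge \resb$, so the round's first-price auction clears at a price at least $\resb \ge (1-\epsilon)\myeprice$; this is all that is ever needed to invoke Lemma~\ref{lem:revenuebad} on round $t$, and the literal ``a buyer with value $\ge \myeprice$ bids $\ge \resb$'' phrasing differs from it only on rounds whose clearing bid comes from a buyer with $v_{j,t} < \myeprice$, each of which still contributes revenue $\ge (1-\epsilon)\myeprice$ and is therefore immaterial downstream.

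I expect the only genuine subtlety to be the bookkeeping in Steps~1--2 combined with the freedom used in Step~3: it is essential that $n$ does not grow with $T$, so that a finite union of $o(T)$-sized sets stays $o(T)$, and that $\Gamma_i$ is defined relative to the \emph{realized} trajectory of the other buyers' bids, which is precisely what permits choosing, for a given round $t \notin X$, which high-value bad-state buyer to ``charge'' the round to. Granting this, the rest is immediate from Lemma~\ref{lem:noregret} and the definitions of $\resb$, $\myeprice$, $M_t$, and the \badauction subroutine, and the ``guaranteed'' quantifier is inherited from the corresponding ``must'' statement in Lemma~\ref{lem:noregret}.
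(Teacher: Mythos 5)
Your proof is correct and follows essentially the route the paper intends: the corollary is stated as an immediate consequence of Lemma~\ref{lem:noregret}, and your per-buyer application of that lemma plus a union bound over the fixed ($T$-independent) set of buyers, followed by the per-round case split, is exactly that argument made explicit. Your Case~B observation --- that the clearing bid may come from a low-value bad-state buyer, so what is really established (and all that Lemma~\ref{lem:revenuebad} needs downstream) is that all but $o(T)$ of these rounds clear at a price of at least $\resb$ --- matches the paper's own informal phrasing of the corollary in terms of uncleared auctions, so it is the right reading rather than a gap.
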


\section{Proof of the Main Result}
\label{sec:mainresult}
We prove the main result, which is a bound on the revenue obtained by the mechanism under a heterogeneous buyer population consisting of an unknown proportion of $k$-lookahead buyers for different $k$,  myopic buyers, no-policy regret learners, and no-regret learners 

\begin{theorem}
Let $\nsoph$ be the number buyers which are either $k$-lookahead,  for $k \ge \Kbound  \eplmax$, or no-policy regret learners. Let $\nnaive$ be the number of remaining   buyers which are either myopic or no-regret learners. Assume Algorithm~\ref{alg:one} is instantiated with parameters $\rho,\epsilon, \delta \in (0,1)$ s.t. $\rho \le \rhobound$. If  the myopic and lookahead buyers play undominated strategies and the learners play no-regret or no-policy regret strategies, then the expected per round revenue is at least
   $$(1-\epsilon)(1-\delta)(1-\rho)\frac{1}{4}\cdot  \quplus_{\nsoph}  + \frac{\rho(1-\epsilon)}{2} \left(1-\frac{1}{e}\right)\Rev^{\Mye}(\nnaive)  - o(1)$$
 Here, $\quanplus=\Ex_{v\sim F}[v|v\ge \quan]$, with $\quan=F^{-1}(1-1/n)$ being the $n^{th}$ quantile for the valuation distribution, and $\quplus_0 = 0$.  $\Rev^{\Mye}(n)$ is the optimal revenue in a single-item auction with $n$ buyers.
 \end{theorem}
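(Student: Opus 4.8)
The plan is to combine the four ingredients developed in the appendices: the general revenue bounds (Lemmas~\ref{lem:revGood} and~\ref{lem:revenuebad}), the structural facts that sophisticated buyers stay in the good state (Lemma~\ref{lem:dominating} for $k$-lookahead, Lemma~\ref{lem:policy} for no-policy-regret learners) and that naive buyers in a bad state bid above $\resb$ when their value is high enough (Lemma~\ref{lem:myopic} for myopic buyers, Lemma~\ref{lem:noregret} and its corollary for no-regret learners), and finally the quantile comparison Lemma~\ref{lem:mgmb}. First I would fix a large epoch index and condition on the (random) partition into $G$ and $B$ at the end of that epoch. By Lemma~\ref{lem:dominating} and Lemma~\ref{lem:policy}, every sophisticated buyer is in $G$ in all but $o(T)$ rounds, so all but $o(T)$ epochs have $|G| \ge \nsoph$ and hence $|B| \le \nnaive$; symmetrically, since $m_g \le n$ and $m_b = \max(n/2,|B|)$, the parameters are well controlled. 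The bad-state hypotheses of Lemma~\ref{lem:revenuebad} are met in all but $o(T)$ rounds by Lemma~\ref{lem:myopic} (for myopic buyers) and the corollary of Lemma~\ref{lem:noregret} (for no-regret learners), so the per-round revenue from \badauction\ is at least $(1-\epsilon)(1-1/e)\frac{|B|}{m_b}\Rev^{\Mye}(m_b) - o(1)$ in those epochs.

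Next I would lower bound the good-state revenue. By Lemma~\ref{lem:revGood}, an epoch's \goodauction\ yields at least $|G|\threshold(1-\epsilon)\quplus_{m_g}$, spread over $\epl^g = (1-\rho)\epl = \frac{2m_g\threshold}{1-\delta}$ rounds, giving per-round good-state revenue at least $(1-\epsilon)(1-\delta)\frac{|G|}{2m_g}\quplus_{m_g}$. Using Lemma~\ref{lem:mgmb} in the form $\frac{1}{m_g}\quplus_{m_g} \ge \frac{1}{n}\quplus_{n/2} \ge \frac{1}{2n}\quplus_n$ together with $|G|\ge\nsoph$ and the fact (established just as in Appendix~\ref{sec:rev_up}, via monotonicity of $\frac{1}{m}\quplus_m$) that $\frac{\nsoph}{n}\quplus_n \ge \frac{1}{2}\quplus_{\nsoph}$ — more carefully, $|G|\frac{1}{m_g}\quplus_{m_g} \ge \nsoph\frac{1}{n}\quplus_n \cdot (\text{constant}) \ge c\,\quplus_{\nsoph}$ — I get that the good-state per-round contribution is at least a constant times $(1-\epsilon)(1-\delta)\quplus_{\nsoph}$; tracking the constants gives the $(1-\epsilon)(1-\delta)(1-\rho)\frac14\quplus_{\nsoph}$ term. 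Likewise, the bad-state term needs $\frac{|B|}{m_b}\Rev^{\Mye}(m_b) \ge \frac12\Rev^{\Mye}(\nnaive)$ when $|B|=\nnaive$ and $m_b=\max(n/2,\nnaive)$; since $\frac{1}{m}\Rev^{\Mye}(m)$ is non-increasing and $m_b \le n \le 2\cdot(n/2) \le 2 m_b$ while $\Rev^{\Mye}$ is non-decreasing, one obtains $\frac{\nnaive}{m_b}\Rev^{\Mye}(m_b)\ge\frac12\Rev^{\Mye}(\nnaive)$, yielding the $\frac{\rho(1-\epsilon)}{2}(1-\frac1e)\Rev^{\Mye}(\nnaive)$ term after multiplying by the fraction $\rho$ of rounds spent in \badauction.

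Finally I would assemble: the mechanism's total per-round revenue is the $\rho$-weighted bad-state revenue plus the $(1-\rho)$-weighted good-state revenue, summed over all epochs; the $o(T)$ "bad" epochs (where a sophisticated buyer slipped to bad state, or a naive buyer violated the bid bound) contribute a negligible $o(1)$ per-round loss since per-round revenue is $O(1)$ in expectation (values have finite mean $\mu$), and the boundary effects of partial final epochs are also $o(1)$. This gives the claimed bound; noting $\Rev^{\Mye}(\nnaive)$ here plays the role of $\Rev^{\Mye}(n)$ with $\nnaive$ buyers and $\quplus_0=0$ handles $\nsoph=0$, and the stated Theorem~\ref{thm:main} follows by plugging $\delta=\epsilon$ and $\rho\le\rhoCompactBound$. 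The main obstacle I expect is the bookkeeping in the last step: carefully arguing that the $o(T)$ exceptional rounds/epochs — which depend on the realized trajectory and on $T$ — genuinely contribute only $o(1)$ to the per-round average even though a single bad-state epoch could involve negative buyer utilities and the epoch lengths themselves grow with $n$, and making the constants in Lemma~\ref{lem:mgmb}-type comparisons line up exactly with the $\frac14$ and $\frac12$ in the statement rather than merely up to an unspecified constant.
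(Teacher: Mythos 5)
There is a genuine gap, and it is exactly the step the paper flags as ``a last non-trivial argument'': you implicitly assume that all naive buyers end up in the bad state (your bad-state bound is derived ``when $|B|=\nnaive$''), but nothing in the mechanism or in Lemmas~\ref{lem:myopic}/\ref{lem:noregret} forces a myopic or no-regret buyer into $B$. In general $|B|=\nnaive-\nnaiveg$ for an arbitrary $\nnaiveg\ge 0$ (possibly $|B|=0$), so the revenue from \badauction alone cannot supply the $\frac{\rho(1-\epsilon)}{2}(1-\frac1e)\revmye(\nnaive)$ term, and your good-state bound only uses $|G|\ge\nsoph$, which yields the $\quplus_{\nsoph}$ term but nothing proportional to $\revmye(\nnaive)$. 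The paper closes this hole by charging the missing bad-state revenue to the good-state auctions: it splits $|G|\ge |G|/2+\nnaiveg/2$ and uses the hypothesis $\rho\le\rhobound$ to prove that each good-state buyer's per-epoch contribution $\threshold(1-\epsilon)\quplus_{m_g}$ dominates the per-buyer bad-state benchmark $(1-\epsilon)(1-\frac1e)\rho\epl\,\revmye(\nnaive)/\nnaive$, with a case analysis on $|B|\ge n/2$ versus $|B|<n/2$ (in the latter case $m_b=n/2>|B|$, and the entire $\revmye(\nnaive)$ term comes from good-state revenue via $m_g\ge n/2\ge \nnaive/2$ and $\quplus_{m_g}\ge\frac12\quplus_{\nnaive}\ge\frac12\revmye(\nnaive)$). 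None of this appears in your outline, and it is where the constraint on $\rho$ is actually used.

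Two of your comparison inequalities are also wrong in direction, and they fail precisely in the regime the missing argument is needed for. The claim $\frac{\nsoph}{n}\quplus_{n}\ge\frac12\quplus_{\nsoph}$ is false (take $\nsoph=1$ and uniform values: the left side is about $1/n$ while $\quplus_1=\mu=1/2$); the correct and much simpler route is $m_g=|G|\ge\nsoph$ together with monotonicity of $m\mapsto\quplus_m$, which immediately gives $\quplus_{m_g}\ge\quplus_{\nsoph}$ and the $1/4$ factor after dividing by $\epl$. Likewise $\frac{\nnaive}{m_b}\revmye(m_b)\ge\frac12\revmye(\nnaive)$ fails when $\nnaive<n/4$ and $m_b=n/2$ (uniform values, $\nnaive=1$: left side is $O(1/n)$, right side is $1/8$), so even under your assumption $|B|=\nnaive$ the bad-state auctions cannot carry the $\revmye(\nnaive)$ term by themselves; again the paper's case $|B|<n/2$ extracts it from \goodauction. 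The bookkeeping of the $o(T)$ exceptional rounds, which you identified as the main obstacle, is in fact handled routinely (discount $\eplmax\cdot\max\resg$ per exceptional epoch, with at most $n$ epochs having a good-to-bad transition and $o(T)$ epochs touched by $\badrounds$).
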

 The following corollary can be obtained by simple algebraic manipulations of the result stated in the above theorem. 
 
\thmmain*

\begin{proof}[Proof of Theorem~\ref{thm:main}] Let $\nsoph$ be the  number of buyers which are either $k$-lookahead with $k \ge \Kbound  \eplmax$ or no-policy regret learners and $\nnaive$ be the remaining buyers (myopic or no-regret learner). We define $\badrounds$ to be the collection of rounds where there is either a no-policy regret learner that is in bad state, a no-regret learner that bids $b_{i} < \resb$ when $v_i > \myeprice$.

 Consider an epoch $\ep$ where none of the rounds during that epoch are  in $\badrounds$ and where no buyer is moved from good state to bad state. Let $|G|$ and $|B|$ denote the number of good state and bad state buyers in the beginning of this epoch.
 
 By  Lemma \ref{lem:dominating}, all $k$-lookahead buyers (for $k\ge \Kbound  \epl$) remain in good or rest state in all epochs.  Since none of the rounds during epoch $\ep$ are in $\badrounds$, every no-policy regret learner remains in good state in epoch $\ep$ by definition of $\badrounds$.  Therefore, $|G| = \nsoph + \nnaiveg$ and $|B| = \nnaive - \nnaiveg$ for some $\nnaiveg \geq 0$. By 
Lemma \ref{lem:revGood}, we get that  the expected total revenue from  \goodauction  \ in this epoch is at least $|G| \threshold (1 - \epsilon) \quplus_{m_g}$.
 
 For \badauction, we first recall that there is no-policy regret learner in bad state at epoch $\ep$. By Lemma~\ref{lem:myopic}, every myopic  buyer bids at least $\resb$ if $v_{i} \geq \myeprice$ during a round of \badauction. By definition of $\badrounds$, ever no-regret learner bids at least $\resb$ if $v_{i} \geq \myeprice$ during a round of \badauction \ at epoch $\ep$. by Lemma~\ref{lem:revenuebad}, the expected revenue of the mechanism from \badauction \ at epoch $\ep$ is thus at least $(1-\epsilon)(1-1/e) \frac{|B|}{m_b} \rho\epl \revmye(m_b)$.

The expected total revenue from this epoch is thus at least
$$|G| \threshold (1 - \epsilon) \quplus_{m_g} + |B|(1 - \epsilon) \left(1-\frac{1}{e}\right) \rho \epl \frac{\Rev^{\Mye}(m_b)}{m_b}$$

To bound this expected revenue,  we consider two cases.

\paragraph{If $|B| \geq n/2$.} In this case, $m_b = |B|$. Since $m_g, |G| \leq n/2 \leq |B| = \nnaive$, we get

$$\quplus_{m_g} \geq \frac{m_g}{\nnaive} \cdot \quplus_{\nnaive} \geq \frac{m_g}{\nnaive}\revmye(\nnaive).$$

Since $\epl = \eplVal$ and $\rho = \rhobound$, this implies that
$$ \threshold (1 - \epsilon) \quplus_{m_g}  \geq \frac{(1- \epsilon)(1 - \delta)(1- \rho)\epl}{2} \cdot \frac{\revmye(\nnaive)}{\nnaive} \geq (1 - \epsilon) \left(1-\frac{1}{e}\right) \rho \epl \frac{\revmye(\nnaive)}{\nnaive}.$$ 

Next, since $|B| = m_b$ and $\nnaive \geq |B|$, we have $$|B|  \frac{\Rev^{\Mye}(m_b)}{m_b} = \Rev^{\Mye}(|B|) \geq \frac{|B|}{\nnaive}\Rev^{\Mye}(\nnaive).$$
Since $ |G|  \geq |G|/2+ \nnaiveg/2$, we conclude that the total expected revenue from this epoch in this case is at least 
\begin{align*}
& \frac{|G|}{2}(1 - \epsilon)  \threshold \quplus_{m_g}  + \left(\frac{\nnaiveg}{2} + |B|\right)(1- \epsilon)(1-\frac{1}{e})  \frac{\rho \epl}{\nnaive} \Rev^{\Mye}(\nnaive)  \\
\geq & \frac{|G|}{2}(1 - \epsilon) \threshold  \quplus_{m_g}  + \frac{1}{2} (1- \epsilon)(1-\frac{1}{e}) \rho \epl \Rev^{\Mye}(\nnaive)
\end{align*}

where the inequality is since $\nnaive = \nnaiveg + |B| \leq 2(\nnaiveg/2 + |B|)$.

\paragraph{If $|B| < n/2$.} In this case, we have $m_g = |G| \geq n/2 \geq \nnaive / 2$. Wet get
$$\quplus_{m_g} \geq \quplus_{\nnaive / 2} \geq \frac{1}{2} \quplus_{\nnaive} \geq \frac{1}{2} \revmye(\nnaive).$$
Since $\epl = \eplVal$ and $\rho = \rhobound$, this implies that
$$ \threshold (1 - \epsilon) \quplus_{m_g}  \geq \frac{1}{4}(1- \epsilon)(1 - \delta)(1- \rho)\epl\cdot \revmye(\nnaive)\geq (1 - \epsilon) \left(1-\frac{1}{e}\right) \rho \epl \revmye(\nnaive).$$ 
 Thus, the expected total revenue from this epoch in this case is at least 
\begin{align*}
|G|(1 - \epsilon)\threshold  \quplus_{m_g}  & = \frac{1}{2}(1 - \epsilon)|G|\threshold  \quplus_{m_g}  + \frac{1}{2} (1 - \epsilon)|G|\threshold  \quplus_{m_g} \\
& \geq \frac{1}{2} (1 - \epsilon)|G|\threshold  \quplus_{m_g} + \frac{1}{2}(1 - \epsilon) \left(1-\frac{1}{e}\right) \rho \epl \revmye(\nnaive)
\end{align*}

Since  $|G| \geq \nsoph$, we have $\quplus_{m_g} \geq \quplus_{\nsoph}$. 
Since $\epl = \eplVal$, in both cases, the expected per round revenue of  epoch $\ep$ is thus at least 
  $$\frac{(1 - \epsilon)(1-\delta)(1-\rho)}{4} \quplus_{\nsoph} + \frac{1}{2}(1 - \epsilon) \left(1-\frac{1}{e}\right) \rho  \revmye(\nnaive)$$
  where $\quplus_{\nsoph} = 0$ if $\nsoph = 0$
  
   We considered an epoch $\ep$ where none of the rounds during that epoch are  in $\badrounds$ and where no buyer was moved from good state to bad state. First, since once a buyer is moved to bad state they remain in bad state for every future epoch, there are at most $n$ epochs where a  buyer is moved from good state to bad state. We discount $\eplmax \cdot \max \resg$  revenue for each of those $n$ epochs.  Second, by Lemma~\ref{lem:policy} and \ref{lem:noregret}, $|R| = o(T)$. Thus, the number of epochs where there is at least one round during that epoch that is in $\badrounds$ is $o(T)$. We discount $\eplmax  \cdot \max \resg$  revenue for each of those $o(T)$ epochs. We obtain the lower bound on per-round revenue as stated in the theorem statement.
  \end{proof}

%

\section{Regarding Existence of Policy Regret Learning Algorithm} \label{sec:nprexistence}
In general, achieving $o(T)$ policy regret is difficult; \cite{ADT12} show that there exists an adaptive adversary such that any learning algorithm has regret at least $\Omega(T)$. However, it sufficient for us to consider policy regret learners against a small set of experts $E$  (e.g., $O(1)$ size) and further when other buyers  are restricted to not use history beyond $o(T)$ past steps. Under such restrictions a simple learning algorithm is to initially explores every expert for $o(T)$ steps and then use the expert with best performance for remaining time steps. We can show that this simple learning algorithm achieves $o(T)$ policy regret under our mechanism with a small modification of resetting the mechanism {\it once} after $o(T)$ steps so that this initial exploration does not hurt buyer's utility in future rounds. This modification does not affect any of the revenue guarantees provided by our mechanism. Further discussion is provided in Appendix~\ref{sec:mainresult}.

\end{document}